\newtheorem{theorem}{Theorem}
\newtheorem{definition}[theorem]{Definition}
\newtheorem{remark}[theorem]{Remark}
\newenvironment{proof}[1][Proof]{\noindent\textbf{#1.} }{\ \rule{0.5em}{0.5em}}
\begin{document}

\author{N. Balakrishnan\footnote{McMaster University, ON, Canada. email: bala@mcmaster.ca }, E. Castilla\footnote{Complutense University of Madrid, Spain. email: elecasti@ucm.es }, N. Martin\footnote{Complutense University of Madrid, Spain. email: nirian@estad.ucm.es} \ and L. Pardo\footnote{Complutense University of Madrid, Spain. email: lpardo@ucm.es }}
\title{\textbf{ Divergence-based robust inference under proportional hazards model for one-shot device life-test}}
\date{}
\maketitle

\begin{abstract}
In this paper, we develop robust estimators and tests for one-shot device testing under proportional hazards assumption based on divergence measures. Through a detailed Monte Carlo simulation study and a numerical example, the developed inferential procedures are shown to be more robust than the classical procedures, based on maximum likelihood estimators.
\end{abstract}
\section{Introduction}

A one-shot device is a unit that performs its function only once, and after use the device  either gets destroyed or must be rebuilt.  For this kind of device, one can only know whether the failure time is either before or after a specific time, and consequently the  lifetimes are either left- or right-censored, with the lifetime being less than the inspection time if the test outcome is a failure (resulting in left censoring) and the lifetime being more than the inspection time if the test outcome is a success (resulting in right censoring). Some examples of such one-shot devices include  automobile air bags, missiles (\cite{olwell2001}) and fire extinguishers (\cite{newby2008}).

For devices with long lifetimes, accelerated life-tests (ALTs) are commonly used to induce quick failures.  An ALT shortens the life span of the products by increasing the levels of stress factors, such as temperature, humidity, pressure and voltage. Then, a link function relating stress levels and lifetime is applied to extrapolate the lifetimes of units from accelerated conditions to normal operating conditions. The study of one-shot device from ALT data has been discussed considerably recently, mainly motivated by the work of \cite{fan2009_JSCS}.

Under the classical parametric setup, product lifetimes are assumed to be fully described by a probability distribution involving some model parameters. This has been done with some common lifetime distributions such as exponential (\cite{balakrishnan2012_IEEEREL}), gamma  or Weibull (\cite{balakrishnan2013_IEEEREL}). However, as data from one-shot devices do not contain actual lifetimes, parametric inferential methods can be very sensitive to violations of the model assumption.  \cite{ling2015_IEEERELIABILITY} proposed a semi-parametric model, in which, under the proportional hazards assumption,  the hazard rate is allowed to change in a non-parametric way. The simulation study carried out by \cite{ling2015_IEEERELIABILITY} shows that their proposed method works very well. However, this method suffer from lack of robustness, as it is based on the  (non-robust) maximum likelihood estimator (MLE) of model parameters. Recently years, some work has been done for developing robust methods for one-shot device testing, most of it based on divergence measures (see, for example, \cite{balakrishnan2019_IEEEIT,balakrishnan2019_METRIKA, balakrishnan2019_IEEERELIABILITY}).

In this paper, we extend the robust approach proposed in the above mentioned papers and develop here  robust estimators and tests for one-shot device testing based on divergence measures under proportional hazards model.  Section \ref{sec:PH_model} described the model and some basic concepts and results. The estimating equations and asymptotic  properties of the proposed estimators are given in Section \ref{sec:PH_esti}. Wald-type tests are then developed in Section \ref{sec:PH_wald} based on the proposed estimators, as a generalization of the classical Wald test. In Section \ref{sec:PH_sim}, a simulation study is carried out to demonstrate the robustness of the proposed method.  A numerical example is finally presented in Section \ref{sec:PH_data}, and some concluding comments are finally made in Section \ref{sec:PH_conc}.

\section{Model formulation \label{sec:PH_model} }
Consider $S$ constant-stress accelerated life-tests and $I$ inspection times. For the $i$-th life-test, $K_{s}$ devices are placed under stress level combinations with $J$ stress factors, $\boldsymbol{x}_s=(x_{s1},\dots,x_{sJ})$, of which $K_{is}$ are tested at the $i$-th inspection time $IT_i$, where $K_s=\sum_{i=1}^IK_{is}$ and $0<IT_1<\cdots<IT_I$.  Then, the numbers of devices that have failed by time $IT_i$ at stress $\boldsymbol{x}_s$ are recorded as $n_{is}$. One-shot device testing data obtained from such a life-test can then be represented as $(n_{is}, K_{is}, \boldsymbol{x}_s,IT_i)$, for $i=1,2,\dots,I$ and $s=1,2,\dots,S$.

Instead of assuming that the true lifetimes of devices follow a specific parametric distribution such as exponential, gamma or Weibull,  we  assume here that  the cumulative hazard function of the lifetimes of devices is of the proportional form
\begin{align}\label{eq:PH_cumhazard}
H(t, \boldsymbol{x};\boldsymbol{\eta},\boldsymbol{\alpha})=H_0(t;\boldsymbol{\eta})\lambda(\boldsymbol{x};\boldsymbol{\alpha}),
\end{align}
where $H_0(t;\boldsymbol{\eta})$ is the baseline cumulative hazard function with $\boldsymbol{\eta}=(\eta_1,\dots,\eta_I)$, and $\boldsymbol{\alpha}=(\alpha_1\dots,\alpha_J)$ is a vector of coefficients for stress factors. The model in (\ref{eq:PH_cumhazard}) is thus composed of two independent components, with one measuring  the changes in the baseline ($H_0(t;\boldsymbol{\eta})$) and the other influencing the stress factors ($\lambda(\boldsymbol{x};\boldsymbol{\alpha})$). 

The corresponding reliability function is given by
\begin{align}
R(t,\boldsymbol{x};\boldsymbol{\eta},\boldsymbol{\alpha})=\exp\left(- H(t, \boldsymbol{x};\boldsymbol{\eta},\boldsymbol{\alpha})\right)=R_0(t;\boldsymbol{\eta})^{\lambda(\boldsymbol{x};\boldsymbol{\alpha})},
\end{align}
where $R_0(t;\boldsymbol{\eta})=\exp(-H_0(t;\boldsymbol{\eta}))$ is the baseline reliability function, with $0<R_0(IT_I;\boldsymbol{\eta})<R_0(IT_{I-1};\boldsymbol{\eta})<\dots<R_0(IT_1;\boldsymbol{\eta})<1$. Therefore, we let
\begin{align*}
\gamma(\eta_{i}) = \left\{ \begin{array}{ll}
         1-R_0(IT_I;\boldsymbol{\eta})=1-\exp(-\exp(\eta_I)), & i=I,\\
        \dfrac{1-R_0(IT_i;\boldsymbol{\eta})}{1-R_0(IT_{i+1};\boldsymbol{\eta})}=1-\exp(-\exp(\eta_i)), & i=1,\dots,I-1.\end{array} \right. 
\end{align*}
We then have
\begin{align*}
R_0(IT_i;\boldsymbol{\eta})=1-\prod_{m=i}^I\left\{1-\exp(-\exp(\eta_m))  \right\}=1-G_i,
\end{align*}
where $G_i=\prod_{m=i}^I\left\{1-\exp(-\exp(\eta_m))\right\}$.

We now assume a log-linear link function for relating the stress levels to the failure times of the units in the cumulative hazard function  in  (\ref{eq:PH_cumhazard}), as
$$
\lambda(\boldsymbol{x}_s;\boldsymbol{\alpha})=\exp(\boldsymbol{\alpha}^T\boldsymbol{x}_s)=\exp\left(\sum_{j=1}^J\alpha_jx_{sj} \right).
$$

\subsection{Maximum likelihood estimator}

Consider the proportional hazards model for one-shot devices in (\ref{eq:PH_cumhazard}). The log-likelihood function based on these data is then given by

\begin{align} \label{eq:PH_likelihood}
\ell(n_{11},\dots,n_{IS};\boldsymbol{\eta},\boldsymbol{\alpha})=&\sum_{i=1}^I\sum_{s=1}^S n_{is}\log \left[1-R(IT_i,\boldsymbol{x}_s;\boldsymbol{\eta},\boldsymbol{\alpha}) \right] \nonumber \\ 
& + (K_{is}-n_{is})\log\left[R(IT_i,\boldsymbol{x}_s;\boldsymbol{\eta},\boldsymbol{\alpha}) \right] + C \nonumber\\
=&\sum_{i=1}^I\sum_{s=1}^S n_{is}\log \left[1-(1-G_i)^{\exp\left(\sum_{j=1}^J\alpha_jx_{sj} \right)}\right] \nonumber \\ 
& + (K_{is}-n_{is})\log\left(1-G_i \right)\exp\left(\sum_{j=1}^J\alpha_jx_{sj} \right) + C,
\end{align}
where $C$ is a constant not depending on $\boldsymbol{\eta}$ and $\boldsymbol{\alpha}$. 

\begin{definition}\label{def:PH_mle}
Let $\boldsymbol{\theta}=(\boldsymbol{\eta},\boldsymbol{\alpha})$.  The MLE, $\widehat{\boldsymbol{\theta}}$, of $\boldsymbol{\theta}$, is  obtained by maximization of (\ref{eq:PH_likelihood}), i.e.,
\begin{equation}
\widehat{\boldsymbol{\theta}}=\underset{\boldsymbol{\theta}}{\arg \min } \ \ell(n_{11},\dots,n_{IS};\boldsymbol{\eta},\boldsymbol{\alpha}).
\end{equation}
\end{definition}

In order to study the relation between the MLE, $\widehat{\boldsymbol{\theta}}$, in Definition \ref{def:PH_mle}, with the Kullback-Leibler divergence measure, we introduce  the empirical and theoretical probability vectors, as follows:
\begin{align}
\widehat{\boldsymbol{p}}_{is}&=\left(\widehat{p}_{is1},\widehat{p}_{is2} \right)^T=\left( \frac{n_{is}}{K_{is}},\frac{K_{is}-n_{is}}{K_{is}}\right)^T, \quad i=1,\dots,I,\ s=1,\dots,S, \label{eq:PH_empvector}\\
\boldsymbol{\pi}_{is}(\boldsymbol{\eta},\boldsymbol{\alpha})&=\left(\pi_{is1}(\boldsymbol{\eta},\boldsymbol{\alpha}),\pi_{is2}(\boldsymbol{\eta},\boldsymbol{\alpha}) \right)^T,  \quad i=1,\dots,I,\ s=1,\dots,S,\label{eq:PH_theovector}
\end{align}
where \ $\pi_{is1}(\boldsymbol{\eta},\boldsymbol{\alpha})=1-R(IT_i,\boldsymbol{x}_s;\boldsymbol{\eta},\boldsymbol{\alpha})$ and \ $\pi_{is2}(\boldsymbol{\eta},\boldsymbol{\alpha})=R(IT_i,\boldsymbol{x}_s;\boldsymbol{\eta},\boldsymbol{\alpha})$.

\begin{definition}
The Kullback-Leibler divergence measure between $\widehat{\boldsymbol{p}}_{is}$ and $\boldsymbol{\pi }_{is}(\boldsymbol{\eta},\boldsymbol{\alpha})$\  is given by 
\begin{align*}
d_{KL}(\widehat{\boldsymbol{p}}_{is},\boldsymbol{\pi}_{is}(\boldsymbol{\eta},\boldsymbol{\alpha})) =\widehat{p}_{is1}\log \left( \dfrac{\widehat{p}_{is1}}{\pi_{is1}(\boldsymbol{\eta},\boldsymbol{\alpha})}\right) +\widehat{p}_{is2}\log \left( \dfrac{\widehat{p}_{is2}}{\pi_{is2}(\boldsymbol{\eta},\boldsymbol{\alpha})}\right)  
\end{align*}
and similarly the weighted Kullback-Leibler divergence measure of all the units, where $K=\sum_{s=1}^S K_s$ is the total number of devices under the life-test, is given by
\begin{align}
&\sum_{i=1}^{I}\sum_{s=1}^{S}\frac{K_{is}}{K}d_{KL}(\widehat{\boldsymbol{p}}_{is},\boldsymbol{\pi}_{is}(\boldsymbol{\eta},\boldsymbol{\alpha})) \nonumber\\
&=\frac{1}{K}\sum_{i=1}^{I}\sum_{s=1}^{S}K_{is}\left[\widehat{p}_{is1}\log \left( \dfrac{\widehat{p}_{is1}}{\pi_{is1}(\boldsymbol{\eta},\boldsymbol{\alpha})}\right) + \widehat{p}_{is2}\log \left( \dfrac{\widehat{p}_{is2}}{\pi_{is1}(\boldsymbol{\eta},\boldsymbol{\alpha})}\right)\right] \nonumber \\
&=\frac{1}{K}\sum_{i=1}^{I}\sum_{s=1}^{S}\left[n_{is}\log \left( \dfrac{\frac{n_{is}}{K_{is}}}{1-R(IT_i,\boldsymbol{x}_s;\boldsymbol{\eta},\boldsymbol{\alpha})}\right) + (K_{is}-n_{is})\log \left( \dfrac{\frac{K_{is}-n_{is}}{K_{is}}}{R(IT_i,\boldsymbol{x}_s;\boldsymbol{\eta},\boldsymbol{\alpha})}\right)\right].
\end{align}
\end{definition}

For more details, one may refer to \cite{pardo2005}. The relation between the MLE and the estimator obtained by minimizing the weighted Kullback-Leibler divergence measure is obtained on the basis on the following theorem.

\begin{theorem}
\label{res:dkull} The log-likelihood function $\ell(n_{11},\dots,n_{IS};\boldsymbol{\eta},\boldsymbol{\alpha})$, given in (\ref{eq:PH_likelihood}), is related to the weighted Kullback-Leibler divergence measure through 
\begin{equation*}
\sum_{i=1}^{I}\sum_{s=1}^{S}\frac{K_{is}}{K}d_{KL}(\widehat{\boldsymbol{p}}_{is},\boldsymbol{\pi}_{is}(\boldsymbol{\eta},\boldsymbol{\alpha}))=c-\frac{1}{K}\ell(n_{11},\dots,n_{IS};\boldsymbol{\eta},\boldsymbol{\alpha}),
\end{equation*}%
with $c$ being a constant not dependent on  $\boldsymbol{\eta}$ and $\boldsymbol{\alpha}$.
\end{theorem}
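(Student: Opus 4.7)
The plan is to show that the weighted Kullback--Leibler divergence, written out as an explicit double sum, splits into two pieces: one that depends only on the data $(n_{is},K_{is})$ and hence is a constant in $(\boldsymbol{\eta},\boldsymbol{\alpha})$, and one that is precisely $-\ell/K$ up to the constant $C$ appearing in (\ref{eq:PH_likelihood}). Since Theorem~\ref{res:dkull} is an identity rather than an inequality or a limit statement, everything should reduce to a direct algebraic manipulation.

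First, I would start from the last displayed form of the weighted divergence, namely
\begin{equation*}
\frac{1}{K}\sum_{i=1}^{I}\sum_{s=1}^{S}\left[n_{is}\log\!\left(\frac{n_{is}/K_{is}}{1-R(IT_i,\boldsymbol{x}_s;\boldsymbol{\eta},\boldsymbol{\alpha})}\right)+(K_{is}-n_{is})\log\!\left(\frac{(K_{is}-n_{is})/K_{is}}{R(IT_i,\boldsymbol{x}_s;\boldsymbol{\eta},\boldsymbol{\alpha})}\right)\right],
\end{equation*}
and apply $\log(a/b)=\log a-\log b$ to separate the numerator and denominator inside each logarithm. The numerator contributions, $n_{is}\log(n_{is}/K_{is})$ and $(K_{is}-n_{is})\log((K_{is}-n_{is})/K_{is})$, depend only on the observed counts $n_{is}$ and the sample sizes $K_{is}$; summing them and dividing by $K$ yields a quantity $c$ that does not involve $\boldsymbol{\eta}$ or $\boldsymbol{\alpha}$, as required.

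Next, I would identify the denominator contributions with the log-likelihood. After pulling out the minus sign, these terms collapse to
\begin{equation*}
-\frac{1}{K}\sum_{i=1}^{I}\sum_{s=1}^{S}\Bigl\{n_{is}\log\!\bigl[1-R(IT_i,\boldsymbol{x}_s;\boldsymbol{\eta},\boldsymbol{\alpha})\bigr]+(K_{is}-n_{is})\log\!\bigl[R(IT_i,\boldsymbol{x}_s;\boldsymbol{\eta},\boldsymbol{\alpha})\bigr]\Bigr\},
\end{equation*}
which is exactly $-(1/K)\bigl(\ell(n_{11},\dots,n_{IS};\boldsymbol{\eta},\boldsymbol{\alpha})-C\bigr)$ by the first form of (\ref{eq:PH_likelihood}). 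Absorbing the harmless additive $C/K$ into the constant $c$ from the previous step yields the claimed identity.

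There is no real obstacle here: the statement is a bookkeeping identity whose only subtlety is making sure the constant $C$ from the likelihood and the data-dependent entropy-like term from the divergence are correctly combined into the single constant $c$ on the right-hand side. A useful sanity check to include in the write-up is that the identity makes explicit the well-known equivalence between maximizing $\ell$ and minimizing the weighted Kullback--Leibler divergence, which is the motivation for the divergence-based generalization introduced in the subsequent sections.
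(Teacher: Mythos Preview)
Your proposal is correct. The paper does not actually supply a proof of this theorem; it is stated without proof, with only a general reference to \cite{pardo2005} for background on the Kullback--Leibler divergence. Your direct algebraic argument---splitting each $\log(a/b)$ into a data-only piece and a model-dependent piece, then identifying the latter with $-\ell/K$ up to the constant $C$ from (\ref{eq:PH_likelihood})---is exactly the standard derivation and would serve as the missing proof.
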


\begin{definition}
 The MLE, $\widehat{\boldsymbol{\theta}}$, of $\boldsymbol{\theta}$,  can then be defined as 
\begin{equation}
\widehat{\boldsymbol{\theta}}=\underset{\boldsymbol{\theta}}{\arg \min }\sum_{i=1}^{I}\sum_{s=1}^{S}\frac{K_{is}}{K}d_{KL}(\widehat{\boldsymbol{p}}_{is},\boldsymbol{\pi}_{is}(\boldsymbol{\eta},\boldsymbol{\alpha})).  \label{eq:PH_mle_kull}
\end{equation}
\end{definition}

\begin{remark}\label{rem:PH_Weibull}
Suppose the lifetimes of one-shot devices under test follow the Weibull distribution with the same shape parameter $\tau=exp(b)$ and  scale parameters related to the stress levels, $a_s=exp(\sum_{j=1}^J c_jx_{sj})$, $s=1,\dots,S$. The cumulative distribution function  of the Weibull distribution is then given by
\begin{equation*}
F_T(t;a_s,\tau)=1-exp\left(-\left(\frac{t}{a_s}\right)^{\tau}\right), \quad t>0.
\end{equation*}
If the proportional hazards assumption holds, then the baseline reliability and the coefficients of stress factors are given by
\begin{equation*}
R_0(t;\beta)=exp(-t^{\tau}exp(-\tau c_0))
\end{equation*}
and $\alpha_s=-\tau c_s, \quad s=1,\dots,S$.
Furthermore, we have
\begin{align*}
\eta_i&=log\left(-log\left( 1-\frac{1-R_0(IT_i)}{1-R_0(IT_{i+1})}\right)\right),\\
\eta_I&=\tau (log(IT_I)-c_0).
\end{align*}
\end{remark}

\subsection{Weighted minimum DPD estimator}

Given the probability vectors $\widehat{\boldsymbol{p}}_{is}$ and $\boldsymbol{\pi }_{is}(\boldsymbol{\eta},\boldsymbol{\alpha})$  in (\ref{eq:PH_empvector}) and  (\ref{eq:PH_theovector}), respectively, the density power divergence (DPD) between them,  as a function of a single tuning parameter $\beta\geq 0$, is given by 
\begin{align}
d_{\beta }(\widehat{\boldsymbol{p}}_{is},\boldsymbol{\pi }_{is}(\boldsymbol{\eta},\boldsymbol{\alpha})) =&\left( \pi_{is1}^{\beta+1}(\boldsymbol{\eta},\boldsymbol{\alpha})+\pi_{is2}^{\beta+1}(\boldsymbol{\eta},\boldsymbol{\alpha})\right) -\frac{\beta +1}{\beta }\left( \widehat{p}_{is1}\pi_{is1}^{\beta}(\boldsymbol{\eta},\boldsymbol{\alpha})+\widehat{p}_{is2}\pi_{is2}^{\beta}(\boldsymbol{\eta},\boldsymbol{\alpha})\right)  \notag \\
& +\frac{1}{\beta }\left( \widehat{p}_{is1}^{\beta +1}+\widehat{p}_{is2}^{\beta +1}\right) ,\quad \text{if }\beta >0,  \label{eq:PH_DPD_long}
\end{align}%
and $d_{\beta=0 }(\widehat{\boldsymbol{p}}_{is},\boldsymbol{\pi }_{is}(\boldsymbol{\eta},\boldsymbol{\alpha}))=\lim_{\beta \rightarrow 0^{+}}d_{\beta }(\widehat{\boldsymbol{p}}_{is},\boldsymbol{\pi }_{is}(\boldsymbol{\eta},\boldsymbol{\alpha}))=d_{KL}(\widehat{\boldsymbol{p}}_{is},\boldsymbol{\pi }_{is}(\boldsymbol{\eta},\boldsymbol{\alpha}))$, for $\beta =0$.

As the term $\frac{1}{\beta }\left( \widehat{p}_{is1}^{\beta +1}+\widehat{p}_{is2}^{\beta +1}\right) $ in (\ref{eq:PH_DPD_long}) has no role in the minimization with respect to $\boldsymbol{\theta}$,  we can consider the equivalent measure 
\begin{align*}
d_{\beta }^{\ast}(\widehat{\boldsymbol{p}}_{is},\boldsymbol{\pi }_{is}(\boldsymbol{\eta},\boldsymbol{\alpha})) =\left( \pi_{is1}^{\beta+1}(\boldsymbol{\eta},\boldsymbol{\alpha})+\pi_{is2}^{\beta+1}(\boldsymbol{\eta},\boldsymbol{\alpha})\right) -\frac{\beta +1}{\beta }\left( \widehat{p}_{is1}\pi_{is1}^{\beta}(\boldsymbol{\eta},\boldsymbol{\alpha})+\widehat{p}_{is2}\pi_{is2}^{\beta}(\boldsymbol{\eta},\boldsymbol{\alpha})\right),    \label{eq:PH_DPD_short1}
\end{align*}
and then can redefine the weighted minimum DPD estimator as follows.

\begin{definition}
The weighted minimum DPD estimator for $\boldsymbol{\theta}$ is given by
\begin{equation*}
\widehat{\boldsymbol{\theta}}_{\beta }=\underset{\boldsymbol{\theta}}{\arg \min }\sum_{i=1}^{I}\sum_{s=1}^{S}\frac{K_{is}}{K}d_{\beta }^{\ast}(\widehat{\boldsymbol{p}}_{is},\boldsymbol{\pi }_{is}(\boldsymbol{\eta},\boldsymbol{\alpha})),\quad \text{for }\beta >0,
\end{equation*}%
and for $\beta =0$, we have the MLE, $\widehat{\boldsymbol{\theta}}$, as defined in (\ref{eq:PH_mle_kull}).
\end{definition}

\section{Estimation and asymptotic distribution \label{sec:PH_esti} }
The estimating equations for the weighted minimum DPD estimator are as given in the following theorem.
\begin{theorem} \label{th:esti_eq}
For $\beta\geq0$, the estimating equations are given by%
\begin{align*}
\sum_{i=1}^{I}\sum_{s=1}^{S}&\delta_{is}(\boldsymbol{\eta}) \left(  K_{is}(1-R(IT_i,\boldsymbol{x}_s;\boldsymbol{\eta},\boldsymbol{\alpha})) -n_{is}\right) \\
&\times\left[(1-R(IT_i,\boldsymbol{x}_s;\boldsymbol{\eta},\boldsymbol{\alpha}))^{\beta-1}  + R^{\beta-1}(IT_i,\boldsymbol{x}_s;\boldsymbol{\eta},\boldsymbol{\alpha})  \right]  =\boldsymbol{0}_{I},\\
\sum_{i=1}^{I}\sum_{s=1}^{S}& \delta_{is}(\boldsymbol{\alpha})\left(  K_{is}(1-R(IT_i,\boldsymbol{x}_s;\boldsymbol{\eta},\boldsymbol{\alpha})) -n_{is}\right)\\
&\times \left[
(1-R(IT_i,\boldsymbol{x}_s;\boldsymbol{\eta},\boldsymbol{\alpha}))^{\beta-1}  + R^{\beta-1}(IT_i,\boldsymbol{x}_s;\boldsymbol{\eta},\boldsymbol{\alpha})  \right] =\boldsymbol{0}_{J},
\end{align*}
where
\begin{align}
\delta_{is}(\boldsymbol{\eta})=\frac{\partial R(IT_i,\boldsymbol{x}_s;\boldsymbol{\eta},\boldsymbol{\alpha})}{\partial \boldsymbol{\eta}}&=-(1-G_i)^{\lambda(\boldsymbol{x}_s;\boldsymbol{\alpha})-1}\lambda(\boldsymbol{x}_s;\boldsymbol{\alpha}) \frac{\partial G_i}{\partial \boldsymbol{\eta}}, \label{eq:PH_delta_eta}\\
\delta_{is}(\boldsymbol{\alpha})=\frac{\partial R(IT_i,\boldsymbol{x}_s;\boldsymbol{\eta},\boldsymbol{\alpha})}{\partial \boldsymbol{\alpha}}&=(1-G_i)^{\lambda(\boldsymbol{x}_s;\boldsymbol{\alpha})}log(1-G_i)\lambda(\boldsymbol{x}_s;\boldsymbol{\alpha})\boldsymbol{x}_s,\label{eq:PH_delta_alpha}
\end{align}
with 
\begin{equation}
\frac{\partial G_i}{\partial \eta_u}=\left\{\begin{array}{cc}
\exp(\eta_u)\exp(-\exp(\eta_u))G_i/\gamma(\eta_u) &, i\leq u,\\
0 &, i> u.
\end{array}\right. 
\end{equation}
\end{theorem}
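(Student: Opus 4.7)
The plan is to differentiate the weighted DPD objective with respect to each component of $\boldsymbol{\theta}=(\boldsymbol{\eta},\boldsymbol{\alpha})$, set the derivative to zero, and then simplify using the chain rule together with the identities $\pi_{is1}=1-R$ and $\pi_{is2}=R$ (where $R$ denotes $R(IT_i,\boldsymbol{x}_s;\boldsymbol{\eta},\boldsymbol{\alpha})$). Since $d_\beta^\ast$ depends on $\boldsymbol{\theta}$ only through $R$, I can first carry out the computation treating $R$ as a scalar variable, and then multiply by the Jacobian $\partial R/\partial\boldsymbol{\theta}$.

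First I would compute
\begin{equation*}
\frac{\partial d_\beta^\ast}{\partial R}=(\beta+1)\bigl[-(1-R)^\beta+R^\beta\bigr]-\frac{\beta+1}{\beta}\bigl[-\beta\,\widehat p_{is1}(1-R)^{\beta-1}+\beta\,\widehat p_{is2}R^{\beta-1}\bigr],
\end{equation*}
then substitute $\widehat p_{is1}=n_{is}/K_{is}$ and $\widehat p_{is2}=(K_{is}-n_{is})/K_{is}$ and multiply through by $K_{is}$ to match the weight in the objective. The key algebraic step is to regroup the four resulting terms by pairing $-K_{is}(1-R)^\beta$ with $n_{is}(1-R)^{\beta-1}$ and $K_{is}R^\beta$ with $-(K_{is}-n_{is})R^{\beta-1}$. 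Each pair factors respectively as $(1-R)^{\beta-1}[n_{is}-K_{is}(1-R)]$ and $R^{\beta-1}[n_{is}-K_{is}(1-R)]$ after using $R+(1-R)=1$, so the bracket collapses to
\begin{equation*}
[n_{is}-K_{is}(1-R)]\bigl[(1-R)^{\beta-1}+R^{\beta-1}\bigr].
\end{equation*}
Multiplying by $\partial R/\partial\boldsymbol{\theta}$, summing over $i,s$ with weights $K_{is}/K$, and discarding the overall non-zero factor $(\beta+1)/K$ yields the two displayed estimating equations with $\delta_{is}(\boldsymbol{\eta})$ and $\delta_{is}(\boldsymbol{\alpha})$ in place of the corresponding partial derivatives of $R$.

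Finally I would verify the explicit expressions for $\delta_{is}(\boldsymbol{\eta})$ and $\delta_{is}(\boldsymbol{\alpha})$. Writing $R=(1-G_i)^{\lambda(\boldsymbol{x}_s;\boldsymbol{\alpha})}$, differentiation in $\boldsymbol{\eta}$ gives $-\lambda(\boldsymbol{x}_s;\boldsymbol{\alpha})(1-G_i)^{\lambda-1}\partial G_i/\partial\boldsymbol{\eta}$, as stated. For $\boldsymbol{\alpha}$, the derivative of $(1-G_i)^{\lambda}=\exp(\lambda\log(1-G_i))$ equals $(1-G_i)^{\lambda}\log(1-G_i)\,\partial\lambda/\partial\boldsymbol{\alpha}$, and $\partial\lambda/\partial\boldsymbol{\alpha}=\lambda\,\boldsymbol{x}_s$ from the log-linear link. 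The componentwise formula for $\partial G_i/\partial\eta_u$ follows from logarithmic differentiation of $G_i=\prod_{m=i}^{I}\gamma(\eta_m)$: when $u\geq i$ the product contains the factor $\gamma(\eta_u)$ and the chain rule gives $G_i\cdot\gamma'(\eta_u)/\gamma(\eta_u)$, with $\gamma'(\eta_u)=\exp(\eta_u)\exp(-\exp(\eta_u))$; when $u<i$ the product is free of $\eta_u$ and the derivative vanishes.

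The main obstacle is really just the algebraic bookkeeping in the first step — recognizing that the four ungainly terms produced by differentiation collapse into the clean product $[K_{is}(1-R)-n_{is}]\cdot[(1-R)^{\beta-1}+R^{\beta-1}]$. Everything else is routine chain rule and logarithmic differentiation.
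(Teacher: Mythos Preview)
Your proof is correct and follows essentially the same approach as the paper: differentiate the weighted DPD objective, exploit the binary structure $\pi_{is1}+\pi_{is2}=1$ to collapse the four terms into the product $(\pi_{is1}-\widehat p_{is1})\bigl[\pi_{is1}^{\beta-1}+\pi_{is2}^{\beta-1}\bigr]$, then apply the chain rule through $\partial R/\partial\boldsymbol{\theta}$. Your version is in fact slightly more explicit, since you derive the formulas for $\delta_{is}(\boldsymbol{\eta})$, $\delta_{is}(\boldsymbol{\alpha})$, and $\partial G_i/\partial\eta_u$ directly, whereas the paper simply cites Ling (2015) for these.
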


\begin{proof}
The estimating equations are given by
\begin{align*}
\frac{\partial }{\partial \boldsymbol{\eta}}\sum_{i=1}^{I}\sum_{s=1}^{S}\frac{K_{is}}{K}d_{\beta }^{\ast }(\widehat{\boldsymbol{p}}_{is},\boldsymbol{\pi }_{is}(\boldsymbol{\eta},\boldsymbol{\alpha}))&=\sum_{i=1}^{I}\sum_{s=1}^{S}\frac{K_{is}}{K}\frac{\partial }{\partial \boldsymbol{\eta}}d_{\beta }^{\ast }(\widehat{\boldsymbol{p}}_{is},\boldsymbol{\pi }_{is}(\boldsymbol{\eta},\boldsymbol{\alpha}))=\boldsymbol{0}_{I},\\
\frac{\partial }{\partial \boldsymbol{\alpha}}\sum_{i=1}^{I}\sum_{s=1}^{S}\frac{K_{is}}{K}d_{\beta }^{\ast }(\widehat{\boldsymbol{p}}_{is},\boldsymbol{\pi }_{is}(\boldsymbol{\eta},\boldsymbol{\alpha}))&=\sum_{i=1}^{I}\sum_{s=1}^{S}\frac{K_{is}}{K}\frac{\partial }{\partial \boldsymbol{\alpha}}d_{\beta }^{\ast }(\widehat{\boldsymbol{p}}_{is},\boldsymbol{\pi }_{is}(\boldsymbol{\eta},\boldsymbol{\alpha}))=\boldsymbol{0}_{J},
\end{align*}
with
\begin{align}\label{eq:PH_estimatingProof}
&\frac{\partial }{\partial \boldsymbol{\eta}}d_{\beta }^{\ast }(\widehat{\boldsymbol{p}}_{is},\boldsymbol{\pi }_{is}(\boldsymbol{\eta},\boldsymbol{\alpha})) \notag \\
& =\left( \frac{\partial }{\partial \boldsymbol{\eta}}\pi _{is1}^{\beta +1}(\boldsymbol{\eta},\boldsymbol{\alpha})+\frac{\partial }{\partial \boldsymbol{\eta}}\pi _{is2}^{\beta +1}(\boldsymbol{\eta},\boldsymbol{\alpha})\right) -\frac{\beta +1}{\beta }\left( \widehat{p}_{is1}\frac{\partial }{\partial \boldsymbol{\eta}}\pi _{i1}^{\beta }(\boldsymbol{\theta})+\widehat{p}_{is2}\frac{\partial }{\partial \boldsymbol{\eta}}\pi _{is2}^{\beta }(\boldsymbol{\eta},\boldsymbol{\alpha})\right) \notag\\
& =\left( \beta +1\right) \left( \pi _{is1}^{\beta }(\boldsymbol{\eta},\boldsymbol{\alpha})-\pi_{is2}^{\beta }(\boldsymbol{\eta},\boldsymbol{\alpha})-\widehat{p}_{is1}\pi _{is1}^{\beta -1}(\boldsymbol{\eta},\boldsymbol{\alpha})+\widehat{p}_{is2}\pi _{is2}^{\beta -1}(\boldsymbol{\eta},\boldsymbol{\alpha})\right) \frac{\partial }{\partial \boldsymbol{\eta}}\pi _{is1}(\boldsymbol{\eta},\boldsymbol{\alpha}) \notag\\
& =\left( \beta +1\right) \left( \left( \pi _{i1}(\boldsymbol{\eta},\boldsymbol{\alpha})-\widehat{p}_{i1}\right) \pi _{is1}^{\beta -1}(\boldsymbol{\eta},\boldsymbol{\alpha})-\left( \pi _{is2}(\boldsymbol{\eta},\boldsymbol{\alpha})-\widehat{p}_{is2}\right) \pi _{is2}^{\beta -1}(\boldsymbol{\eta},\boldsymbol{\alpha})\right) \frac{\partial }{\partial \boldsymbol{\eta}}\pi _{is1}(\boldsymbol{\eta},\boldsymbol{\alpha})
\notag\\
& =\left( \beta +1\right) \left( \left( \pi _{is1}(\boldsymbol{\eta},\boldsymbol{\alpha})-\widehat{p}_{i1}\right) \pi _{is1}^{\beta -1}(\boldsymbol{\eta},\boldsymbol{\alpha})+\left( \pi _{i1}(\boldsymbol{\eta},\boldsymbol{\alpha})-\widehat{p}_{i1}\right) \pi _{is2}^{\beta -1}(\boldsymbol{\eta},\boldsymbol{\alpha})\right) \frac{\partial }{\partial \boldsymbol{\eta}}\pi _{is1}(\boldsymbol{\eta},\boldsymbol{\alpha}) \notag
\\
& =\left( \beta +1\right) \left( \pi _{is1}(\boldsymbol{\eta},\boldsymbol{\alpha})-\widehat{p}_{is1}\right) \left( \pi _{is1}^{\beta -1}(\boldsymbol{\eta},\boldsymbol{\alpha})+\pi _{is2}^{\beta-1}(\boldsymbol{\eta},\boldsymbol{\alpha})\right) \frac{\partial }{\partial \boldsymbol{\eta}}\pi_{is1}(\boldsymbol{\eta},\boldsymbol{\alpha})  
\end{align}%
and
\begin{align}
&\frac{\partial }{\partial \boldsymbol{\alpha}}d_{\beta }^{\ast }(\widehat{\boldsymbol{p}}_{is},\boldsymbol{\pi }_{is}(\boldsymbol{\eta},\boldsymbol{\alpha})) \notag \\
& =\left( \beta +1\right) \left( \pi _{is1}(\boldsymbol{\eta},\boldsymbol{\alpha})-\widehat{p}_{is1}\right) \left( \pi _{i1}^{\beta -1}(\boldsymbol{\eta},\boldsymbol{\alpha})+\pi _{is2}^{\beta-1}(\boldsymbol{\eta},\boldsymbol{\alpha})\right) \frac{\partial }{\partial \boldsymbol{\alpha}}\pi_{is1}(\boldsymbol{\eta},\boldsymbol{\alpha}).  
\end{align}
But, $\frac{\partial }{\partial \boldsymbol{\eta}}\pi_{is1}(\boldsymbol{\eta},\boldsymbol{\alpha})$ and $\frac{\partial }{\partial \boldsymbol{\alpha}}\pi_{is1}(\boldsymbol{\eta},\boldsymbol{\alpha})$ are as given in (\ref{eq:PH_delta_eta}) and (\ref{eq:PH_delta_alpha}), respectively. See equations (25) and (26) of \cite{ling2015_IEEERELIABILITY} for details.
\end{proof}

\vspace{0.3cm}

\begin{theorem} \label{th:asymp}
Let $\boldsymbol{\theta}^*$ be the true value of the parameter $\boldsymbol{\theta}$. Then, the asymptotic distribution of the weighted minimum DPD estimator, $\widehat{\boldsymbol{\theta}}_{\beta}$, is given by
\[
\sqrt{K}(  \widehat{\boldsymbol{\theta}}_{\beta}-\boldsymbol{\theta}^*)  \overset{\mathcal{L}}{\underset{K\mathcal{\rightarrow}\infty}{\longrightarrow}}\mathcal{N}\left(  \boldsymbol{0}_{I+J},\boldsymbol{{J}}_{\beta}^{-1}(\boldsymbol{\theta}^*)\boldsymbol{K}_{\beta}(\boldsymbol{\theta}^*)\boldsymbol{{J}}_{\beta}^{-1}(\boldsymbol{\theta}^*)\right)  ,
\]
where $\boldsymbol{{J}}_{\beta}(\boldsymbol{\theta})$ and $\boldsymbol{{K}}_{\beta}(\boldsymbol{\theta})$ are given by

\begin{align}
\boldsymbol{{J}}_{\beta}(\boldsymbol{\theta})&=\sum_{i=1}^{I}\sum_{s=1}^{S} \frac{K_{is}}{K}\Delta_{is}(\boldsymbol{\eta},\boldsymbol{\alpha})\left[(1-R(IT_i,\boldsymbol{x}_s;\boldsymbol{\eta},\boldsymbol{\alpha}))^{\beta-1}  + R^{\beta-1}(IT_i,\boldsymbol{x}_s;\boldsymbol{\eta},\boldsymbol{\alpha})  \right] \label{eq:PH_J} \\
\boldsymbol{{K}}_{\beta}(\boldsymbol{\theta})&=\sum_{i=1}^{I}\sum_{s=1}^{S} \frac{K_{is}}{K}\Delta_{is}(\boldsymbol{\eta},\boldsymbol{\alpha})(1-R(IT_i,\boldsymbol{x}_s;\boldsymbol{\eta},\boldsymbol{\alpha}))R(IT_i,\boldsymbol{x}_s;\boldsymbol{\eta},\boldsymbol{\alpha}) \notag \\
& \qquad  \qquad  \times \left[(1-R(IT_i,\boldsymbol{x}_s;\boldsymbol{\eta},\boldsymbol{\alpha}))^{\beta-1}  + R^{\beta-1}(IT_i,\boldsymbol{x}_s;\boldsymbol{\eta},\boldsymbol{\alpha})  \right]^2, \label{eq:PH_K}
\end{align}
with
\begin{align*}
\Delta_{is}(\boldsymbol{\eta},\boldsymbol{\alpha})=\left(\begin{array}{cc}
\delta_{is}(\boldsymbol{\eta})\delta_{is}^T(\boldsymbol{\eta})&\delta_{is}(\boldsymbol{\eta})\delta_{is}^T(\boldsymbol{\alpha}) \\
\delta_{is}(\boldsymbol{\alpha})\delta_{is}^T(\boldsymbol{\eta}) & \delta_{is}(\boldsymbol{\alpha})\delta_{is}^T(\boldsymbol{\alpha})
\end{array}\right),
\end{align*}
where $\delta_{is}(\boldsymbol{\eta})$ and $\delta_{is}(\boldsymbol{\alpha})$ are as given in (\ref{eq:PH_delta_eta}) and (\ref{eq:PH_delta_alpha}), respectively.
\end{theorem}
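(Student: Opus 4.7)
The strategy is the standard M-estimator (Z-estimator) argument: rewrite the estimating equations of Theorem \ref{th:esti_eq} in the form $\boldsymbol{\Psi}_K(\widehat{\boldsymbol{\theta}}_\beta)=\boldsymbol{0}$, Taylor expand around the true value $\boldsymbol{\theta}^{*}$, and combine a CLT for $\sqrt{K}\,\boldsymbol{\Psi}_K(\boldsymbol{\theta}^{*})$ with a LLN for $\nabla_{\boldsymbol\theta}\boldsymbol{\Psi}_K$ via Slutsky's theorem. Concretely, define the $(I+J)$-vector $\boldsymbol{\delta}_{is}(\boldsymbol{\theta})=(\boldsymbol{\delta}_{is}(\boldsymbol{\eta})^T,\boldsymbol{\delta}_{is}(\boldsymbol{\alpha})^T)^T$ and write
\[
\boldsymbol{\Psi}_K(\boldsymbol{\theta})=\frac{1}{K}\sum_{i=1}^{I}\sum_{s=1}^{S}K_{is}\bigl(\pi_{is1}(\boldsymbol{\theta})-\widehat{p}_{is1}\bigr)\bigl[\pi_{is1}^{\beta-1}(\boldsymbol{\theta})+\pi_{is2}^{\beta-1}(\boldsymbol{\theta})\bigr]\boldsymbol{\delta}_{is}(\boldsymbol{\theta}),
\]
which is exactly (a constant multiple of) the stacked system in Theorem \ref{th:esti_eq}. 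Assume the usual accelerated life-test regularity: $K_{is}/K\to\omega_{is}>0$, the model is identifiable at $\boldsymbol{\theta}^{*}$, $\pi_{is1}(\boldsymbol{\theta})$ is twice continuously differentiable, and the matrix $\boldsymbol{J}_\beta(\boldsymbol{\theta}^{*})$ in (\ref{eq:PH_J}) is positive definite. Consistency $\widehat{\boldsymbol{\theta}}_\beta\xrightarrow{P}\boldsymbol{\theta}^{*}$ follows from uniform convergence of the weighted objective to its limit and identifiability, in the same manner as the one-shot-device DPD consistency arguments in \cite{balakrishnan2019_IEEEIT,balakrishnan2019_METRIKA}; I would appeal to this rather than reprove it.

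For the CLT step, note that $n_{is}\sim\mathrm{Bin}(K_{is},\pi_{is1}(\boldsymbol{\theta}^{*}))$ are independent across $(i,s)$, so $K_{is}(\pi_{is1}(\boldsymbol{\theta}^{*})-\widehat{p}_{is1})=K_{is}\pi_{is1}(\boldsymbol{\theta}^{*})-n_{is}$ has mean $0$ and variance $K_{is}\pi_{is1}(\boldsymbol{\theta}^{*})\pi_{is2}(\boldsymbol{\theta}^{*})$. Therefore
\[
\sqrt{K}\,\boldsymbol{\Psi}_K(\boldsymbol{\theta}^{*})=\frac{1}{\sqrt{K}}\sum_{i,s}\bigl(K_{is}\pi_{is1}(\boldsymbol{\theta}^{*})-n_{is}\bigr)\bigl[\pi_{is1}^{\beta-1}+\pi_{is2}^{\beta-1}\bigr]\boldsymbol{\delta}_{is}(\boldsymbol{\theta}^{*})
\]
has mean $\boldsymbol{0}$ and covariance
\[
\frac{1}{K}\sum_{i,s}K_{is}\pi_{is1}(\boldsymbol{\theta}^{*})\pi_{is2}(\boldsymbol{\theta}^{*})\bigl[\pi_{is1}^{\beta-1}+\pi_{is2}^{\beta-1}\bigr]^{2}\boldsymbol{\delta}_{is}\boldsymbol{\delta}_{is}^{T}=\boldsymbol{K}_\beta(\boldsymbol{\theta}^{*}),
\]
matching (\ref{eq:PH_K}). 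Since each summand is a bounded-coefficient centred binomial increment, the Lindeberg condition is immediate as $K_{is}\to\infty$, so Lindeberg--Feller gives $\sqrt{K}\boldsymbol{\Psi}_K(\boldsymbol{\theta}^{*})\xrightarrow{\mathcal{L}}\mathcal{N}(\boldsymbol{0},\boldsymbol{K}_\beta(\boldsymbol{\theta}^{*}))$.

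For the Jacobian, differentiate $\boldsymbol{\Psi}_K$ componentwise. Using $\partial\pi_{is1}/\partial\boldsymbol{\theta}=-\boldsymbol{\delta}_{is}(\boldsymbol{\theta})$, the term obtained by differentiating the factor $(\pi_{is1}(\boldsymbol{\theta})-\widehat{p}_{is1})$ contributes
\[
-\frac{1}{K}\sum_{i,s}K_{is}\bigl[\pi_{is1}^{\beta-1}+\pi_{is2}^{\beta-1}\bigr]\boldsymbol{\delta}_{is}(\boldsymbol{\theta})\boldsymbol{\delta}_{is}(\boldsymbol{\theta})^{T},
\]
whose probability limit at $\boldsymbol{\theta}^{*}$ is $-\boldsymbol{J}_\beta(\boldsymbol{\theta}^{*})$ with $\boldsymbol{\Delta}_{is}=\boldsymbol{\delta}_{is}\boldsymbol{\delta}_{is}^{T}$ exactly as in (\ref{eq:PH_J}). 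Every remaining term produced by the product rule carries the factor $(\pi_{is1}(\boldsymbol{\theta})-\widehat{p}_{is1})$; when evaluated near $\boldsymbol{\theta}^{*}$ each such factor is $O_{P}(K_{is}^{-1/2})$, so after the $K^{-1}\sum K_{is}$ weighting these terms vanish in probability, and by the continuous mapping theorem combined with $\widehat{\boldsymbol{\theta}}_\beta\xrightarrow{P}\boldsymbol{\theta}^{*}$ the mean-value point $\widetilde{\boldsymbol{\theta}}$ in the Taylor expansion produces the same limit.

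Finally, the mean-value theorem gives $\boldsymbol{0}=\boldsymbol{\Psi}_K(\boldsymbol{\theta}^{*})+\nabla_{\boldsymbol\theta}\boldsymbol{\Psi}_K(\widetilde{\boldsymbol{\theta}})(\widehat{\boldsymbol\theta}_\beta-\boldsymbol{\theta}^{*})$, so
\[
\sqrt{K}(\widehat{\boldsymbol\theta}_\beta-\boldsymbol{\theta}^{*})=-\bigl[\nabla_{\boldsymbol\theta}\boldsymbol{\Psi}_K(\widetilde{\boldsymbol{\theta}})\bigr]^{-1}\sqrt{K}\,\boldsymbol{\Psi}_K(\boldsymbol{\theta}^{*})\xrightarrow{\mathcal{L}}\boldsymbol{J}_\beta^{-1}(\boldsymbol{\theta}^{*})\,\mathcal{N}(\boldsymbol{0},\boldsymbol{K}_\beta(\boldsymbol{\theta}^{*})),
\]
and Slutsky's theorem delivers the claimed sandwich covariance $\boldsymbol{J}_\beta^{-1}\boldsymbol{K}_\beta\boldsymbol{J}_\beta^{-1}$. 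The main obstacle is the bookkeeping for $\nabla\boldsymbol{\Psi}_K$: one must verify carefully that all the product-rule terms carrying a residual $(\pi_{is1}-\widehat{p}_{is1})$ are indeed negligible and that the surviving piece is precisely $-\boldsymbol{J}_\beta$ with the cross-blocks $\boldsymbol{\delta}_{is}(\boldsymbol{\eta})\boldsymbol{\delta}_{is}(\boldsymbol{\alpha})^{T}$ in the stated position; the CLT and inversion steps, by contrast, are routine once independence and the Lindeberg condition are noted.
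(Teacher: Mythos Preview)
Your argument is correct: the direct M-estimator route (Taylor expand the stacked estimating function, apply Lindeberg--Feller to the binomial residuals to obtain $\boldsymbol{K}_\beta$, show the Jacobian converges to $-\boldsymbol{J}_\beta$ because all product-rule remainder terms carry an $O_P(K_{is}^{-1/2})$ residual, then invoke Slutsky) yields exactly the stated sandwich covariance, and your bookkeeping of the signs and of $\boldsymbol{\delta}_{is}\boldsymbol{\delta}_{is}^{T}=\Delta_{is}$ is right.

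The paper, however, does not carry out this computation from scratch. Instead it defines the score-type vectors $\boldsymbol{u}_{isj}=\pi_{isj}^{-1}\partial\pi_{isj}/\partial\boldsymbol{\theta}$ and then simply invokes Result~3.1 of \cite{ghosh2013_EJS}, which gives the asymptotic normality of minimum-DPD estimators under non-identically-distributed observations with the generic matrices $\boldsymbol{J}_\beta=\sum_{i,s}\tfrac{K_{is}}{K}\sum_j\boldsymbol{u}_{isj}\boldsymbol{u}_{isj}^{T}\pi_{isj}^{\beta+1}$ and $\boldsymbol{K}_\beta=\sum_{i,s}\tfrac{K_{is}}{K}\bigl(\sum_j\boldsymbol{u}_{isj}\boldsymbol{u}_{isj}^{T}\pi_{isj}^{2\beta+1}-\boldsymbol{\xi}_{is,\beta}\boldsymbol{\xi}_{is,\beta}^{T}\bigr)$; the remainder of the paper's proof is purely algebraic, substituting $\boldsymbol{u}_{isj}\boldsymbol{u}_{isj}^{T}=\pi_{isj}^{-2}\Delta_{is}$ and collapsing the two-point sums over $j$ to recover (\ref{eq:PH_J}) and (\ref{eq:PH_K}). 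So the paper outsources exactly the Taylor/CLT/LLN work that you perform explicitly. What your approach buys is self-containment and a transparent identification of $\boldsymbol{K}_\beta$ with the binomial variance $K_{is}\pi_{is1}\pi_{is2}$ and of $\boldsymbol{J}_\beta$ with the expected Jacobian; what the paper's approach buys is brevity, at the cost of a black-box citation. Either is acceptable, but if you submit yours you should state the regularity assumptions (positive limiting weights $K_{is}/K$, nonsingular $\boldsymbol{J}_\beta$) up front rather than in passing, since the paper effectively delegates those hypotheses to \cite{ghosh2013_EJS}.
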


\begin{proof}
We denote
\begin{align*}
\boldsymbol{u}_{isj}(\boldsymbol{\eta},\boldsymbol{\alpha})  &  =\left(  \frac{\partial\log\pi_{isj}(\boldsymbol{\eta},\boldsymbol{\alpha})}{\partial\boldsymbol{\eta}},\frac{\partial\log\pi_{isj}(\boldsymbol{\eta},\boldsymbol{\alpha})}{\partial\boldsymbol{\alpha}}\right)  ^{T}\\
&= \left(\frac{1}{\pi_{isj}(\boldsymbol{\eta},\boldsymbol{\alpha})}\frac{\partial\pi_{isj}(\boldsymbol{\eta},\boldsymbol{\alpha})}{\partial\boldsymbol{\eta}},\frac{1}{\pi_{isj}(\boldsymbol{\eta},\boldsymbol{\alpha})}\frac{\partial\pi_{isj}(\boldsymbol{\eta},\boldsymbol{\alpha})}{\partial\boldsymbol{\alpha}}\right)  ^{T} \\
&=\left(  \frac{(-1)^{j+1}}{\pi_{isj}(\boldsymbol{\eta},\boldsymbol{\alpha})}\delta_{is}(\boldsymbol{\eta}),\frac{(-1)^{j+1}}{\pi_{isj}(\boldsymbol{\eta},\boldsymbol{\alpha})}\delta_{is}(\boldsymbol{\alpha})\right)  ^{T},
\end{align*}
with $\delta_{is}(\boldsymbol{\eta})$ and $\delta_{is}(\boldsymbol{\alpha})$ as given in  (\ref{eq:PH_delta_eta}) and (\ref{eq:PH_delta_alpha}), respectively.

Now, upon using Result 3.1 of \cite{ghosh2013_EJS}, we have

\[
\sqrt{K}\left(  \widehat{\boldsymbol{\theta}}_{\beta}-\boldsymbol{\theta}^{*}\right)  \overset{\mathcal{L}}{\underset{K\mathcal{\rightarrow}\infty}{\longrightarrow}}\mathcal{N}\left(  \boldsymbol{0}_{I+J},\boldsymbol{J}_{\beta}^{-1}(\boldsymbol{\theta}^{*})\boldsymbol{K}_{\beta}(\boldsymbol{\theta}^{*})\boldsymbol{J}_{\beta}^{-1}(\boldsymbol{\theta}^{*})\right)  ,
\]
where

\begin{align*}
\boldsymbol{J}_{\beta}(\boldsymbol{\theta})  &  =\sum_{i=1}^{I}\sum_{s=1}^{S}\sum_{j=1}^{2}\frac{K_{is}}{K}\boldsymbol{u}_{isj}(\boldsymbol{\eta},\boldsymbol{\alpha})\boldsymbol{u}_{isj}^{T}(\boldsymbol{\eta},\boldsymbol{\alpha})\pi_{isj}^{\beta+1}(\boldsymbol{\eta},\boldsymbol{\alpha}),\\
\boldsymbol{K}_{\beta}(\boldsymbol{\theta})  &  = \sum_{i=1}^{I}\sum_{s=1}^{S}\sum_{j=1}^{2}\frac{K_{is}}{K}\boldsymbol{u}_{isj}(\boldsymbol{\eta},\boldsymbol{\alpha})\boldsymbol{u}_{isj}^{T}(\boldsymbol{\eta},\boldsymbol{\alpha})\pi_{isj}^{2\beta+1}(\boldsymbol{\eta},\boldsymbol{\alpha}) -\sum_{i=1}^{I}\sum_{s=1}^{S}\frac{K_{is}}{K}\boldsymbol{\xi}_{is,\beta}(\boldsymbol{\eta},\boldsymbol{\alpha})\boldsymbol{\xi}_{is,\beta}^{T}(\boldsymbol{\eta},\boldsymbol{\alpha}),
\end{align*}
with
\begin{align*}
\boldsymbol{\xi}_{i,\beta}(\boldsymbol{\eta},\boldsymbol{\alpha})  &  =\sum_{j=1}^{2}\boldsymbol{u}_{isj}(\boldsymbol{\eta},\boldsymbol{\alpha})\pi_{isj}^{\beta+1}(\boldsymbol{\eta},\boldsymbol{\alpha}) =\left(  \delta_{is}(\boldsymbol{\eta}),\delta_{is}(\boldsymbol{\alpha})\right)  ^{T}\sum_{j=1}^{2}(-1)^{j+1}\pi_{isj}^{\beta}(\boldsymbol{\eta},\boldsymbol{\alpha}).
\end{align*}
Now, for $\boldsymbol{u}_{isj}(\boldsymbol{\eta},\boldsymbol{\alpha})\boldsymbol{u}_{isj}^{T}(\boldsymbol{\eta},\boldsymbol{\alpha})$, we have
\begin{align*}
\boldsymbol{u}_{isj}(\boldsymbol{\eta},\boldsymbol{\alpha})\boldsymbol{u}_{isj}^{T}(\boldsymbol{\eta},\boldsymbol{\alpha})&=\frac{1}{\pi_{isj}^{2}(\boldsymbol{\eta},\boldsymbol{\alpha})}
\left(\begin{array}{cc}
\delta_{is}(\boldsymbol{\eta})\delta_{is}^T(\boldsymbol{\eta})&\delta_{is}(\boldsymbol{\eta})\delta_{is}^T(\boldsymbol{\alpha}) \\
\delta_{is}(\boldsymbol{\alpha})\delta_{is}^T(\boldsymbol{\eta}) & \delta_{is}(\boldsymbol{\alpha})\delta_{is}^T(\boldsymbol{\alpha})
\end{array}\right) =\frac{1}{\pi_{ij}^{2}(\boldsymbol{\theta})}\Delta_{is}(\boldsymbol{\eta},\boldsymbol{\alpha}),
\end{align*}
with
\begin{align*}
\Delta_{is}(\boldsymbol{\eta},\boldsymbol{\alpha})=\left(\begin{array}{cc}
\delta_{is}(\boldsymbol{\eta})\delta_{is}^T(\boldsymbol{\eta})&\delta_{is}(\boldsymbol{\eta})\delta_{is}^T(\boldsymbol{\alpha}) \\
\delta_{is}(\boldsymbol{\alpha})\delta_{is}^T(\boldsymbol{\eta}) & \delta_{is}(\boldsymbol{\alpha})\delta_{is}^T(\boldsymbol{\alpha})
\end{array}\right).
\end{align*}
It then follows that%
\begin{align*}
\boldsymbol{J}_{\beta}(\boldsymbol{\theta})  &  =\sum_{i=1}^{I}\sum_{s=1}^{S}\frac{K_{is}}{K}\Delta_{is}(\boldsymbol{\eta},\boldsymbol{\alpha})\sum_{j=1}^{2}\pi_{isj}^{\beta-1}(\boldsymbol{\eta},\boldsymbol{\alpha}) \\
&=\sum_{i=1}^{I}\sum_{s=1}^{S}\frac{K_{is}}{K}\Delta_{is}(\boldsymbol{\eta},\boldsymbol{\alpha})\left(  \pi_{is1}^{\beta-1}(\boldsymbol{\eta},\boldsymbol{\alpha})+\pi_{is2}^{\beta-1}(\boldsymbol{\eta},\boldsymbol{\alpha})\right).
\end{align*}
\end{proof}

From here on, and for simplicity, we will  denote ${R}(IT_i, \boldsymbol{x}_0;{\boldsymbol{\eta}},\boldsymbol{\alpha})$ simply by ${R}(IT_i, \boldsymbol{x}_0;{\boldsymbol{\theta}}))$. Based on Theorem \ref{th:asymp}, the asymptotic variance of the weighted minimum DPD estimator of the reliability at inspection time $IT_i$ under normal operating condition $\boldsymbol{x}_0$ is given by 
$$
Var({R}(IT_i, \boldsymbol{x}_0;\widehat{\boldsymbol{\theta}}_{\beta}))\equiv Var({R}(\widehat{\boldsymbol{\theta}}_{\beta}))=\boldsymbol{P}^T \boldsymbol{\Sigma }_{\beta }( \widehat{\boldsymbol{\theta}}_{\beta })\boldsymbol{P},
$$
where
\begin{equation}\label{eq:PH_sigma}
\boldsymbol{\Sigma }_{\beta }( \widehat{\boldsymbol{\theta}}_{\beta })={\boldsymbol{J}}_{\beta }^{-1}( \widehat{\boldsymbol{\theta}}_{\beta }) {\boldsymbol{K}}_{\beta }( \widehat{\boldsymbol{\theta}}_{\beta }) {\boldsymbol{J}}_{\beta}^{-1}( \widehat{\boldsymbol{\theta}}_{\beta }) ,
\end{equation}
 ${\boldsymbol{J}}_{\beta }\left( \boldsymbol{\theta}\right) $, ${\boldsymbol{K}}_{\beta }\left( \boldsymbol{\theta}\right) $ are as given in (\ref{eq:PH_J}) and (\ref{eq:PH_K}), respectively, and $\boldsymbol{P}$ is a vector of the first-order derivates of ${R}(IT_i, \boldsymbol{x}_0;{\boldsymbol{\theta}}))$ with respect to the model parameters (see  (\ref{eq:PH_delta_eta}) and (\ref{eq:PH_delta_alpha})).  Consequently, the $100(1-\alpha)\%$ asymptotic confidence interval for the reliability function  $R(\boldsymbol{\theta})$ is given by
$$
\left(R(\widehat{\boldsymbol{\theta}}_{\beta})-z_{1-\alpha/2}se({R}(\widehat{\boldsymbol{\theta}}_{\beta})), \ R(\widehat{\boldsymbol{\theta}}_{\beta})+z_{1-\alpha/2}se({R}(\widehat{\boldsymbol{\theta}}_{\beta})) \right),
$$
where $se (R(\widehat{\boldsymbol{\theta}}_{\beta}))=\sqrt{Var (R(\widehat{\boldsymbol{\theta}}_{\beta}))}$ and $z_{\gamma}$ is the uppper $\gamma$ percentage point of the standard normal distribution.

However, an asymptotic confidence interval  may be satisfactory only for large sample sizes as it is based on the asymptotic properties of the estimators. 
\cite{balakrishnan2013_IEEEREL} found that, in the case of small sample sizes, the distribution of the MLE of the reliability is quite skewed, and so proposed a logit-transformation  for obtaining a  confidence interval for the reliability function, which can be extended to the case of  the weighted minimum DPD estimators of the reliabilities as well to obtain a confidence interval of the form:
\begin{equation}
\left(\frac{R(\widehat{\boldsymbol{\theta}}_{\beta})}{R(\widehat{\boldsymbol{\theta}}_{\beta})+(1-R(\widehat{\boldsymbol{\theta}}_{\beta}))T}, \ \frac{R(\widehat{\boldsymbol{\theta}}_{\beta})}{R(\widehat{\boldsymbol{\theta}}_{\beta})+(1-R(\widehat{\boldsymbol{\theta}}_{\beta}))/T} \right),
\end{equation}
where $T=\exp\left(z_{1-\alpha/2}\frac{se({R}(\widehat{\boldsymbol{\theta}}_{\beta}))}{R(\widehat{\boldsymbol{\theta}}_{\beta})(1-R(\widehat{\boldsymbol{\theta}}_{\beta}))}\right)$.
 
\section{Wald-type tests \label{sec:PH_wald}}
Let us consider the function $\boldsymbol{m}:\mathbb{R}^{I+J}\longrightarrow \mathbb{R}^{r}$, where $r\leq (I+J)$ and
\begin{equation}\label{eq:testA}
\boldsymbol{m}\left( \boldsymbol{\theta}\right) =\boldsymbol{0}_{r},
\end{equation}
which corresponds to a composite null hypothesis. We assume that the $(I+J) \times r$ matrix \ $\boldsymbol{M}( \boldsymbol{\theta}) =\frac{\partial \boldsymbol{m}^{T}\left( \boldsymbol{\theta}\right) }{\partial \boldsymbol{\theta}}$
exists and is continuous in  $\boldsymbol{\theta}$ and rank $\boldsymbol{M}\left( \boldsymbol{\theta}\right) =r$.  Then, for testing
\begin{equation}
H_{0}:\boldsymbol{\theta\in \Theta }_{0}\text{ against }H_{1}:\boldsymbol{\theta\notin\Theta }_{0},  \label{eq:PH_HComp}
\end{equation}
where  $\boldsymbol{\Theta }_{0}=\left\{ \boldsymbol{\theta}\in \mathbb{R}^{(I+J)} :\boldsymbol{m}\left( \boldsymbol{\theta}\right) =\boldsymbol{0}_{r}\right\}$, we can consider the following Wald-type test statistics:
\begin{align}\label{eq:PH_waldeq}
W_{K}( \widehat{\boldsymbol{\theta}}_{\beta }) =K\boldsymbol{m}^{T}( \widehat{\boldsymbol{\theta}}_{\beta }) \left( \boldsymbol{M}^{T}( \widehat{\boldsymbol{\theta}}_{\beta }) \boldsymbol{\Sigma }( \widehat{\boldsymbol{\theta}}_{\beta }) \boldsymbol{M}( \widehat{\boldsymbol{\theta}}_{\beta }) \right) ^{-1}\boldsymbol{m}( \widehat{\boldsymbol{\theta}}_{\beta }) ,
\end{align}%
where\ $\boldsymbol{\Sigma }_{\beta }( \widehat{\boldsymbol{\theta}}_{\beta })$
is  as given in (\ref{eq:PH_sigma}).


\begin{theorem}
\label{th:test_asym} Under (\ref{eq:testA}), we have 
\begin{equation*}
W_{K}( \widehat{\boldsymbol{\theta}}_{\beta }) \underset{K\rightarrow
\infty }{\overset{\mathcal{L}}{\longrightarrow }}\chi _{r}^{2},
\end{equation*}
where $\chi_{r}^2$ denotes a central chi-square distribution with $r$ degrees of freedom.
\end{theorem}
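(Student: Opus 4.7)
The plan is to combine the asymptotic normality of the weighted minimum DPD estimator established in Theorem \ref{th:asymp} with a first-order Taylor expansion of $\boldsymbol{m}$ around the true parameter $\boldsymbol{\theta}^*$, and then recognize the resulting quadratic form as the squared norm of an asymptotically standard normal vector.

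First, I would apply the multivariate delta method to $\boldsymbol{m}(\widehat{\boldsymbol{\theta}}_\beta)$. Since under $H_0$ the true parameter satisfies $\boldsymbol{m}(\boldsymbol{\theta}^*)=\boldsymbol{0}_r$, and $\boldsymbol{M}(\boldsymbol{\theta})=\partial \boldsymbol{m}^T(\boldsymbol{\theta})/\partial \boldsymbol{\theta}$ is continuous with constant rank $r$ in a neighborhood of $\boldsymbol{\theta}^*$, Theorem \ref{th:asymp} together with the delta method yields
\begin{equation*}
\sqrt{K}\,\boldsymbol{m}(\widehat{\boldsymbol{\theta}}_\beta) \overset{\mathcal{L}}{\underset{K\to\infty}{\longrightarrow}}\mathcal{N}\!\left(\boldsymbol{0}_r,\,\boldsymbol{M}^T(\boldsymbol{\theta}^*)\boldsymbol{\Sigma}_\beta(\boldsymbol{\theta}^*)\boldsymbol{M}(\boldsymbol{\theta}^*)\right).
\end{equation*}
Because $\boldsymbol{\Sigma}_\beta(\boldsymbol{\theta}^*)=\boldsymbol{J}_\beta^{-1}\boldsymbol{K}_\beta \boldsymbol{J}_\beta^{-1}$ is positive definite and $\boldsymbol{M}(\boldsymbol{\theta}^*)$ has full column rank $r$, the $r\times r$ sandwich matrix $\boldsymbol{M}^T(\boldsymbol{\theta}^*)\boldsymbol{\Sigma}_\beta(\boldsymbol{\theta}^*)\boldsymbol{M}(\boldsymbol{\theta}^*)$ is nonsingular, so its inverse exists and is continuous at $\boldsymbol{\theta}^*$.

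Next, I would form the quadratic form and use the standard fact that if $\boldsymbol{Z}_K\overset{\mathcal{L}}{\to}\boldsymbol{Z}\sim\mathcal{N}(\boldsymbol{0}_r,\boldsymbol{A})$ with $\boldsymbol{A}$ positive definite, then $\boldsymbol{Z}_K^T\boldsymbol{A}^{-1}\boldsymbol{Z}_K\overset{\mathcal{L}}{\to}\chi^2_r$. Applying this to $\boldsymbol{Z}_K=\sqrt{K}\,\boldsymbol{m}(\widehat{\boldsymbol{\theta}}_\beta)$ with $\boldsymbol{A}=\boldsymbol{M}^T(\boldsymbol{\theta}^*)\boldsymbol{\Sigma}_\beta(\boldsymbol{\theta}^*)\boldsymbol{M}(\boldsymbol{\theta}^*)$ gives
\begin{equation*}
K\,\boldsymbol{m}^T(\widehat{\boldsymbol{\theta}}_\beta)\bigl(\boldsymbol{M}^T(\boldsymbol{\theta}^*)\boldsymbol{\Sigma}_\beta(\boldsymbol{\theta}^*)\boldsymbol{M}(\boldsymbol{\theta}^*)\bigr)^{-1}\boldsymbol{m}(\widehat{\boldsymbol{\theta}}_\beta)\overset{\mathcal{L}}{\underset{K\to\infty}{\longrightarrow}}\chi^2_r.
\end{equation*}

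Finally, I would replace $\boldsymbol{\theta}^*$ inside the middle matrix by the consistent estimator $\widehat{\boldsymbol{\theta}}_\beta$. The consistency of $\widehat{\boldsymbol{\theta}}_\beta$ (a consequence of Theorem \ref{th:asymp}) combined with the continuous mapping theorem ensures that $\boldsymbol{M}(\widehat{\boldsymbol{\theta}}_\beta)\overset{P}{\to}\boldsymbol{M}(\boldsymbol{\theta}^*)$ and $\boldsymbol{\Sigma}_\beta(\widehat{\boldsymbol{\theta}}_\beta)\overset{P}{\to}\boldsymbol{\Sigma}_\beta(\boldsymbol{\theta}^*)$, so by the continuity of matrix inversion on the positive definite cone, $\bigl(\boldsymbol{M}^T(\widehat{\boldsymbol{\theta}}_\beta)\boldsymbol{\Sigma}_\beta(\widehat{\boldsymbol{\theta}}_\beta)\boldsymbol{M}(\widehat{\boldsymbol{\theta}}_\beta)\bigr)^{-1}$ converges in probability to its counterpart at $\boldsymbol{\theta}^*$. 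A single application of Slutsky's theorem then converts the display above into the stated limit for $W_K(\widehat{\boldsymbol{\theta}}_\beta)$.

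The steps are individually standard; the only delicate point is ensuring that the rank condition on $\boldsymbol{M}$ and the positive definiteness of $\boldsymbol{\Sigma}_\beta$ survive in a neighborhood of $\boldsymbol{\theta}^*$ so that both the sandwich matrix and its estimator are invertible with probability tending to one. Once this is checked, the conclusion follows immediately from the delta method, the quadratic-form lemma for asymptotically normal vectors, and Slutsky's theorem.
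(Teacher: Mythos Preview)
Your proposal is correct and follows essentially the same approach as the paper's own proof: a first-order Taylor expansion (equivalently, the delta method) of $\boldsymbol{m}(\widehat{\boldsymbol{\theta}}_\beta)$ around the true null value, the asymptotic normality from Theorem~\ref{th:asymp}, the quadratic-form-to-$\chi^2_r$ step using the rank condition on $\boldsymbol{M}$, and a final Slutsky argument to replace $\boldsymbol{\theta}^*$ by $\widehat{\boldsymbol{\theta}}_\beta$ in the middle matrix. Your write-up is, if anything, slightly more explicit than the paper in justifying invertibility of the sandwich matrix.
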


\begin{proof}
Let $\boldsymbol{\theta}^{0}\in \Theta_0 $ be the true value of the parameter $\boldsymbol{\theta}$. It is clear that 
\begin{align*}
\boldsymbol{m}\left( \widehat{\boldsymbol{\theta}}_{\beta }\right) &=\boldsymbol{m}\left( \boldsymbol{\theta}^{0}\right)  + \boldsymbol{M}^{T}\left( \widehat{\boldsymbol{\theta}}_{\beta }\right) \left( \widehat{\boldsymbol{\theta}}_{\beta }-\boldsymbol{\theta}^{0}\right) +o_{p}\left( \left\Vert \widehat{\boldsymbol{\theta}}_{\beta }-\boldsymbol{\theta}^{0}\right\Vert \right) \\
&=\boldsymbol{M}^{T}\left( \widehat{\boldsymbol{\theta}}_{\beta }\right) \left( \widehat{\boldsymbol{\theta}}_{\beta }-\boldsymbol{\theta}^{0}\right) +o_{p}\left(K^{-1/2}\right) .
\end{align*}
But, under $H_{0}$,\ $\sqrt{K}\left( \widehat{\boldsymbol{\theta}}_{\beta }-\boldsymbol{\theta}^{0}\right) \underset{K\rightarrow \infty }{\overset{\mathcal{L}}{\longrightarrow }}\mathcal{N}\left( \boldsymbol{0}_{(I+J)},\boldsymbol{\Sigma }_{\beta }\left( \boldsymbol{\theta}^0\right) \right)$.
Therefore, under $H_{0}$,
\[
\sqrt{K}\boldsymbol{m}\left( \widehat{\boldsymbol{\theta}}_{\beta }\right) \underset{K\rightarrow \infty }{\overset{\mathcal{L}}{\longrightarrow }}\mathcal{N}\left( \boldsymbol{0}_{r},\boldsymbol{M}^{T}\left( \boldsymbol{\theta}^{0}\right) \boldsymbol{\Sigma }_{\beta }\left( \boldsymbol{\theta}^0\right) \boldsymbol{M}\left( \boldsymbol{\theta}^{0}\right) \right)
\]
and taking into account that $rank(\boldsymbol{M}\left( \boldsymbol{\theta}^{0}\right) )=r$, we obtain
\begin{equation*}
K\boldsymbol{m}^{T}\left( \widehat{\boldsymbol{\theta}}_{\beta }\right) \left( \boldsymbol{M}^{T}\left( \boldsymbol{\theta}^{0}\right) \boldsymbol{\Sigma }_{\beta }\left( \boldsymbol{\theta}^{0}\right) \boldsymbol{M}\left( \boldsymbol{\theta}^{0}\right) \right) ^{-1}\boldsymbol{m}\left( \widehat{\boldsymbol{\theta}}_{\beta }\right) \underset{K\rightarrow \infty }{\overset{\mathcal{L}}{\longrightarrow }}\chi _{r}^{2}.
\end{equation*}%
Because $\left( \boldsymbol{M}^{T}\left( \widehat{\boldsymbol{\theta}}_{\beta }\right)
\boldsymbol{\Sigma }_{\beta }\left( \widehat{\boldsymbol{\theta}}_{\beta
}\right) \boldsymbol{M}\left( \widehat{\boldsymbol{\theta}}_{\beta }\right)
\right) ^{-1}$ is a consistent estimator of $\left( \boldsymbol{M}^{T}\left( \boldsymbol{\theta}^{0}\right) \boldsymbol{\Sigma }_{\beta }\left( \boldsymbol{\theta}^{0}\right) \boldsymbol{M}\left( \boldsymbol{\theta}^{0}\right)\right) ^{-1}$, we get 
\begin{equation*}
W_{K}\left( \widehat{\boldsymbol{\theta}}_{\beta }\right) \underset{K\rightarrow\infty }{\overset{\mathcal{L}}{\longrightarrow }}\chi _{r}^{2}.
\end{equation*}
\end{proof}

Based on Theorem \ref{th:test_asym}, we shall reject the null hypothesis  in (\ref{eq:PH_HComp}) if
\begin{equation}
W_{K}( \widehat{\boldsymbol{\theta}}_{\beta }) >\chi _{r,\alpha }^{2},
\label{eq:PH_reject}
\end{equation}%
where $\chi^2_{r,\alpha}$ is the upper $\alpha$ percentage point of $\chi_r^2$ distribution.

Wald-type test statistics based on weighted minimum DPD estimators have been considered previously by a number of authors including  \cite{basu2016_STATISTICS, castilla2018_BIOMETRICS} and \cite{balakrishnan2019_IEEEIT, balakrishnan2019_METRIKA, balakrishnan2019_IEEERELIABILITY}.

\section{Monte Carlo Simulation Results \label{sec:PH_sim}}

In this section, an extensive simulation study is carried out for evaluating  the proposed weighted minimum DPD estimators and Wald-type tests. The simulations results are computed based on $1,000$  simulated samples in the R statistical software.  Mean square error (MSE) and bias are computed for evaluating the estimators in both balanced and  unbalanced data sets, while empirical levels and powers are computed for evaluating the tests.

\subsection{Weighted minimum DPD estimators}
Suppose  the lifetimes of test units follow a Weibull distribution (see Remark \ref{rem:PH_Weibull}). All the test units were divided into $S=4$ groups, subject to different acceleration conditions with $J=2$ stress factors  at two elevated stress levels each, that is, $(x_1 ,x_2 ) =\{(55,70),(55,100),(85,70), (85,100)\}$, and were inspected at $I=3$ different times, \newline $(IT_1 ,IT_2 ,IT_3) = (2,5,8)$.

\subsubsection{Balanced data}
We assume $(c_1,c_2)=(-0.03,-0.03)$, $c_0 \in \{6, 6.5 \}$ for different degrees of reliability and $b \in \{0,0.5 \}$. Note that the exponential distribution will be included as a special case when we take $b=0$.  In this framework, we consider “outlying cells” rather than “outlying observations”. A cell which does not follow the one-shot device model will be called an outlying cell or outlier.  In this cell, the number of devices failed will be different than what is expected. This is inthe spirit of principle of inflated models in distribution theory (see Lambert (1992) and Heilbron (1994)). This outlying cell (taken to be $i=3$, $s=4$), is generated under the parameters  $(\tilde{c}_1,\tilde{c}_2)=(-0.027,-0.027)$ and $\tilde{b}\in\{0.05,0.45\}$.

Bias of estimates are then computed for different (equal) samples sizes $K_{is} \in \{50,70,100\}$ and tuning parameters $\beta \in \{0,0.2,0.4,0.6\}$ for both pure and contaminated data. The obtained results are presented in Tables \ref{table:PH_6000}, \ref{table:PH_6005}, \ref{table:PH_6500} and \ref{table:PH_6505}. As expected, when  the sample size increases, errors tend to decrease, while in the contaminated data set, these errors are generally greater than in the case of uncontaminated data. Weighted minimum DPD estimators with $\beta>0$ present a better behaviour than the MLE in terms of robustness.  Note that  reliabilities are underestimated and that the estimates are quite precise in all the cases.

\subsubsection{Unbalanced data \label{sec:PH_unbalanced}}
In this setting, we consider an unbalanced data set, in which at each inspection time $i$, $(K_{i1},K_{i2}, K_{i3}, K_{i4})=(10r,15r,20r,30r)$ for different values of the factor $r \in \{1,2, \dots,10 \}$. We then assume  $(c_0, c_1,c_2)=(6,5, -0.03,-0.03)$, $b=0.5$, and $\tilde{c}_2=-0.027$. MSEs of the parameter $\boldsymbol{\theta}$ are then computed and the obtained results are presented in  Figure \ref{fig:PH_unbalanced}.

As expected, when the sample size increases, the MSE decreases, but lack of robustness of the MLE ($\beta=0$) as compared to the weighted minimum DPD estimators with $\beta>0$ becomes quite evident. 

\subsection{Wald-type tests}

To evaluate the performance of the proposed Wald-type tests, we consider the scenario of unbalanced data proposed discussed above. We consider the testing problem
\begin{equation}
H_0 : \alpha_1 = 0.04946 \quad \text{against}  \quad H_1 : \alpha_1 \neq  0.04946,
\end{equation}

Under the same simulation scheme as used above in Section \ref{sec:PH_unbalanced}, we first evaluate the empirical levels, measured as the proportion of Wald-type test statistics exceeding the corresponding chi-square critical value for a nominal size of $0.05$. The empirical powers are computed in a similar manner, with $\alpha_1^0=0.05276$ ($c_1=-0.032$, $c_2=-0.028$). The obtained results are shown in Figure \ref{fig:PH_unbalanced_Wald}. 

In the case of uncontaminated data, the conventional Wald test has level to be close to nominal value and also has good power performance.  The robust tests, however, has a slightly inflated level values (as compared to the nominal value), but possesses similar power as the conventional Wald test (which is evident from the Figure \ref{fig:PH_unbalanced_Wald}).  But, when the data is contaminated, the level of the conventional Wald test turns out to be quite non-robust and takes on very high values as compared to the nominal level.  This, in turn, results in higher power (see Figure \ref{fig:PH_unbalanced_Wald}).  However, the proposed robust tests maintain levels close to the nominal value and also possesses good power values (as can be seen in the Figure \ref{fig:PH_unbalanced_Wald}).  Thus, taking both level and power into account, the robust tests, though is slightly inferior to the conventional Wald test in the case of uncontaminated data, turn out to be considerably more efficient than the conventional Walk test in the case of contaminated data

\section{Application to Real Data \label{sec:PH_data}}

\subsection{Testing on proportional Hazard rates}
Based on \cite{balakrishnan2012_IEEEREL}, we suggest a distance-based statistic on the form 
\begin{equation}\label{eq:PH_Mbeta}
M_{\beta}=max_{i,s}\left|n_{is}-K_{is}(1-R(IT_i,\boldsymbol{x}_s;\widehat{\boldsymbol{\theta}}_{\beta})) \right|
\end{equation}
as a discrepancy measure for evaluating the fit of the assumed model to the observed data. If the assumed model is not a good fit to the data, we will obtain a large value of $M_{\beta}$. In fact, under the assumed model, we have
$$
n_{is}\sim \text{Binomial} (K_{is},1-R(IT_i,\boldsymbol{x}_s;{\boldsymbol{\theta}}))´,
$$
and so, by denoting $\Phi_{is}=\lceil K_{is}(1-R(IT_i,\boldsymbol{x}_s;\widehat{\boldsymbol{\theta}}_{\beta}))- M_{\beta}\rceil$ and $\Psi_{is}=\lfloor K_{is}(1-R(IT_i,\boldsymbol{x}_s;\widehat{\boldsymbol{\theta}}_{\beta}))+ M_{\beta}\rfloor$, the corresponding exact p-value is given by
\begin{align}\label{eq:PH_pvalue}
p-value&=Pr\left(max_{i,s}\left|n_{is}-K_{is}(1-R(IT_i,\boldsymbol{x}_s;\widehat{\boldsymbol{\theta}}_{\beta})) \right| >M_{\beta}\right)\notag \\
&=1-Pr\left(max_{i,s}\left|n_{is}-K_{is}(1-R(IT_i,\boldsymbol{x}_s;\widehat{\boldsymbol{\theta}}_{\beta})) \right| \leq M_{\beta}\right) \notag \\
&=1-\prod_{i=1}^I\prod_{s=1}^S Pr\left(\left|n_{is}-K_{is}(1-R(IT_i,\boldsymbol{x}_s;\widehat{\boldsymbol{\theta}}_{\beta})) \right| \leq M_{\beta}\right) \notag \\
&=1-\prod_{i=1}^I\prod_{s=1}^SPr\left(\Phi_{is}\leq n_{is}\leq\Psi_{is} \right).
\end{align}
From (\ref{eq:PH_pvalue}), we can readily validate the proportional hazards assumption if the $p$-value is sufficiently large.

\subsection{Choice of the tuning parameter}

In the preceding discussion, we have seen how weighted minimum DPD estimators with $\beta>0$ tend to be more robust than the classical MLE overall whencontamination is present in the data. MLE has been shown to be  more efficient when there is no contamination in the data. It is then necessary to provide a data-driven procedure for the determination of the optimal choice of the tuning parameter that would provide a trade-off between efficiency and robustness. One way to do this is as follows: In a grid  of possible tuning parameters, apply a measure of discrepancy to the data. Then, the tuning parameter that  leads to the minimum discrepancy-statistic can be chosen as  the ``optimal'' one.

A possible choice of the discrepancy measure could be $M_{\beta}$,  given in (\ref{eq:PH_Mbeta}).  Another idea may be by minimizing the estimated mean square error. This method, originally proposed by \cite{warwick2005_JSCS}, was applied in the context of one-shot devices in \cite{balakrishnan2019_IEEEIT,balakrishnan2019_IEEERELIABILITY}. The estimation of the MSE is as follows:
\begin{align*}
\widehat{MSE}(\beta)&=(\boldsymbol{\theta}_{\beta}-\boldsymbol{\theta}_P)^T(\boldsymbol{\theta}_{\beta}-\boldsymbol{\theta}_P)+\frac{1}{K}\text{trace}\left\{\boldsymbol{J}^{-1} _{\beta}(\boldsymbol{\theta}_{\beta})\boldsymbol{K} _{\beta}(\boldsymbol{\theta}_{\beta})\boldsymbol{J}^{-1} _{\beta}(\boldsymbol{\theta}_{\beta})\right\},
\end{align*}
where $\boldsymbol{\theta}_P$ is a pilot estimator, whose choice will affect the overall procedure. If we take $\boldsymbol{\theta}_P=\widehat{\boldsymbol{\theta}}_{\beta}$, the approach  coincides with that of \cite{hong2001_JKSS}, but it does not take into account the model misspecification. Note that, the need for a pilot estimator becomes a drawback of this procedure, as will be seen in the next section.

\subsection{Electric Current data}

We now consider the Electric Current data  (\cite{ling2015_IEEERELIABILITY}), in which 120 one-shot devices  were divided into four accelerated conditions with higher-than-normal temperature and electric current, and inspected at three different times (see Table \ref{table:PH_examplebalaI}).

In Table \ref{table:PH_examplebalaIb}, estimates of the model parameters by the use of the proportional hazards model and the Weibull distribution (see \cite{balakrishnan2019_IEEERELIABILITY}) are provided, for different values of the tuning parameter. Estimates of reliabilities and confidence intervals under the proportional hazards assumption are given in Table \ref{table:PH_examplebalaIc}.

Table \ref{table:PH_examplebalaIb} also presents the dvalues of the distance-statistic $M_{\beta}$ and the corresponding $p$-values. From these values, it seems that the proportional hazards assumption fits the data at least as well as the Weibull model. The best fit is obtained for $\beta=0.5$. To complete the study,  \cite{warwick2005_JSCS} approach is achieved for different values of the pilot estimator in a grid of width  $100$. However, as pointed out before, the final choice of the optimal tuning parameter depends too much on the pilot estimator used (see Figure \ref{fig:PH_betaopt}).   Recently, \cite{basak2020} proposed an ``iterated Warwick and Jones algorithm'' trying to solve this problem.

\begin{table}[h!]\setlength{\tabcolsep}{2.5pt}  \renewcommand{\arraystretch}{1.5}
\caption{Electric Current data  \label{table:PH_examplebalaI}}
\center

\begin{tabular}{r cccccccccccc}
\hline 
Inspection Time $IT_i$ \ & 2 & 2 & 2 & 2 & 5 & 5 & 5 & 5 & 8 & 8 & 8 & 8 \\ 
Temperature $x_{s1}$ \ & 55 & 80 & 55 & 80 & 55 & 80 & 55 & 80 & 55 & 80 & 55 & 80 \\ 
Electric current $x_{s2}$ \ & 70 & 70 & 100 & 100 & 70 & 70 & 100 & 100 & 70 & 70 & 100 & 100 \\ 
Number of failures $n_{is}$ \ & 4 & 8 & 9 & 8 & 7 & 9 & 9 & 9 & 6 & 10 & 9 & 10 \\ 
Number of tested items $K_{is}$ \ & 10 & 10 & 10 & 10 & 10 & 10 & 10 & 10 & 10 & 10 & 10 & 10 \\ 
\hline 
\end{tabular} 
\end{table}

 \begin{table}[h!]\setlength{\tabcolsep}{2.2pt}\renewcommand{\arraystretch}{1.2}
 \caption{Electric Current data: one-shot device testing data analysis by using the proportional hazards model and the Weibull distribution \label{table:PH_examplebalaIb}}
 \center
 \small
\begin{tabular}{l rrrrrrrrrrrrrr}
\hline
 &  \multicolumn{8}{c}{ Proportional Hazards model}&  \multicolumn{6}{c}{Weibull distribution }\\ 
   \cline{2-8} \cline{10-15}
 $\beta$   &$M_\beta$& p-value& $T^\circ$ & current & $\eta_1$  & $\eta_2$  & $\eta_3$  &  &$M_\beta$  & p-value     &intercept & $T^\circ$    & current     & shape    \\    \cline{2-8} \cline{10-15}

0\  & 1.80 & 0.695&0.023 & 0.018 & 0.123 & 0.543 & -2.182 &  &1.80&0.695& 7.022  & -0.053 & -0.040 & -0.817 \\
0.1\ &1.72 &0.745&0.024 & 0.018 & 0.141 & 0.555 & -2.283 &  &1.72&0.745& 7.398  & -0.055 & -0.043 & -0.845 \\
0.2\ &1.65 &0.796&0.024 & 0.019 & 0.156 & 0.565 & -2.399 &  &1.65&0.796& 7.803  & -0.057 & -0.046 & -0.869 \\
0.3 \ &1.58 &0.833&0.025 & 0.020 & 0.167 & 0.572 & -2.534 &  &1.57&0.833& 8.254  & -0.060 & -0.050 & -0.890 \\
0.4\ &1.49 &0.931&0.026 & 0.022 & 0.177 & 0.579 & -2.695 &  &1.49&0.931& 8.747  & -0.064 & -0.054 & -0.906 \\
0.5\ &1.40 &0.942&0.027 & 0.023 & 0.183 & 0.582 & -2.887 &  &1.40&0.942& 9.324  & -0.068 & -0.058 & -0.920 \\
0.6\ &1.51 &0.892&0.029 & 0.025 & 0.187 & 0.585 & -3.130 &  &1.51&0.892& 10.026 & -0.073 & -0.063 & -0.931 \\
0.7\ &1.64 &0.876&0.031 & 0.027 & 0.190 & 0.586 & -3.438 &  &1.64&0.876& 10.868 & -0.079 & -0.069 & -0.938 \\
0.8\ &1.76 &0.861&0.033 & 0.030 & 0.189 & 0.586 & -3.798 &  &1.76&0.861& 11.827 & -0.086 & -0.076 & -0.942 \\
0.9\ &1.84 &0.750&0.036 & 0.032 & 0.185 & 0.582 & -4.106 &  &1.84&0.750& 12.575 & -0.091 & -0.082 & -0.938 \\ \hline
\end{tabular}
\end{table}

\begin{figure}[h!!!]
\centering
\begin{tabular}{cc}
\includegraphics[scale=0.42]{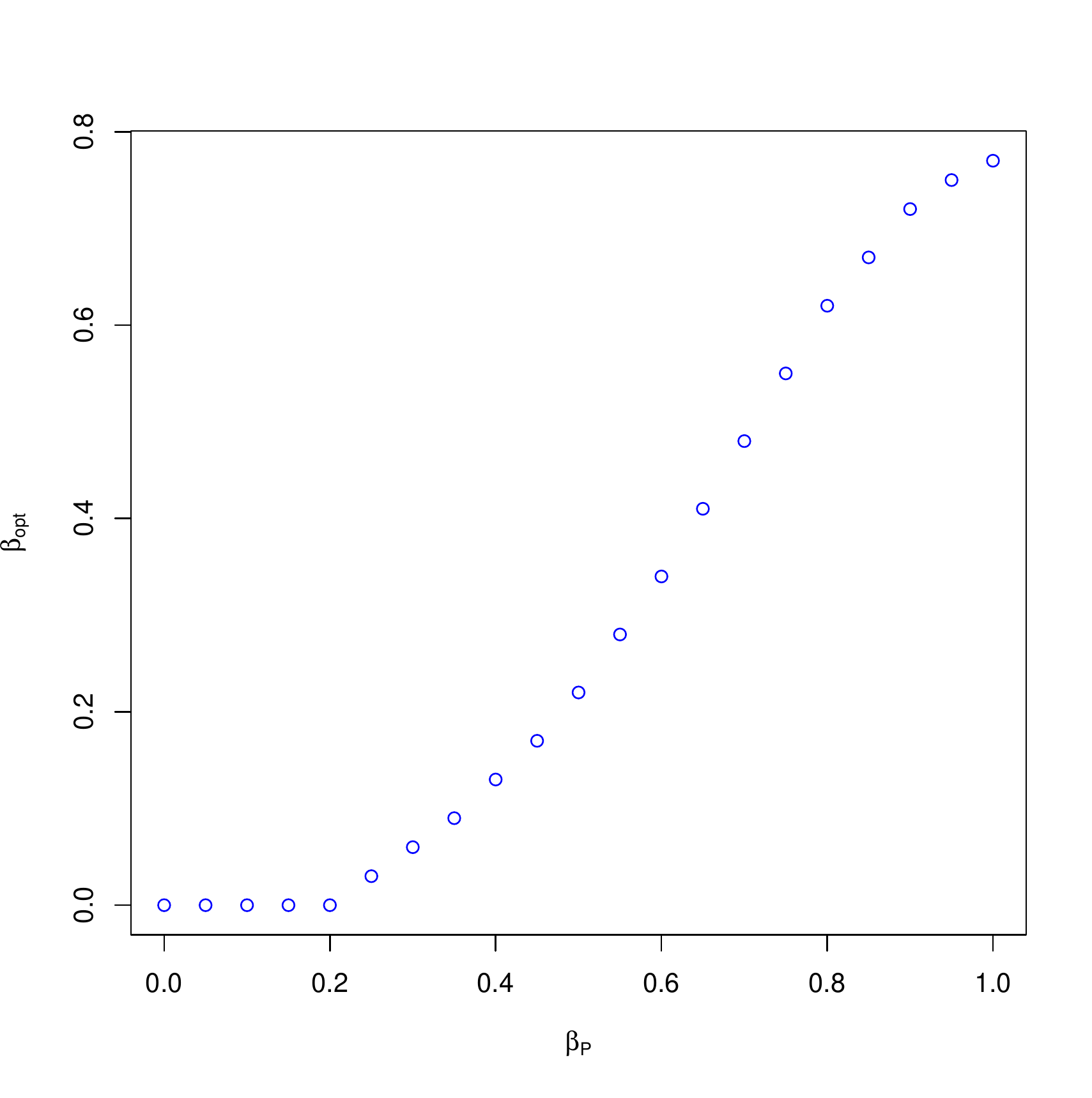} & 
\end{tabular}
\caption{ Electric Current data: estimation of the optimal tuning parameter depending on a pilot estimator by \cite{warwick2005_JSCS} procedure \label{fig:PH_betaopt}}
\end{figure}

\begin{table}[h!]\setlength{\tabcolsep}{3.7pt}\renewcommand{\arraystretch}{1.2}
 \caption{Electric Current data: estimates of reliabilities and corresponding confidence intervals \label{table:PH_examplebalaIc}}
 \center
\begin{tabular}{r llll}
\hline
$\beta$ &  \multicolumn{1}{c}{ $R(2,25,35;\widehat{\boldsymbol{\theta}}_{\beta})$}&  \multicolumn{1}{c}{$R(5,25,35;\widehat{\boldsymbol{\theta}}_{\beta})$} &  \multicolumn{1}{c}{$R(8,25,35;\widehat{\boldsymbol{\theta}}_{\beta})$}\\  \hline
0   & 0.817 (0.516, 0.949) & \ 0.739 (0.397, 0.924) & \ 0.689 (0.336, 0.907) \\
0.1 & 0.824 (0.526, 0.952) & \ 0.751 (0.412, 0.928) & \ 0.704 (0.353, 0.912)\\
0.2 & 0.833 (0.535, 0.956) & \ 0.765 (0.427, 0.934) & \ 0.721 (0.370, 0.919) \\
0.3 & 0.843 (0.545, 0.960) & \ 0.780 (0.442, 0.941) & \ 0.740 (0.387, 0.927) \\
0.4 & 0.855 (0.555, 0.965) & \ 0.797 (0.457, 0.948) & \ 0.760 (0.405, 0.936) \\
0.5 & 0.868 (0.566, 0.971) & \ 0.816 (0.474, 0.956) & \ 0.782 (0.423, 0.946) \\
0.6 & 0.884 (0.581, 0.976) & \ 0.837 (0.493, 0.965) & \ 0.807 (0.445, 0.956) \\
0.7 & 0.901 (0.601, 0.982) & \ 0.861 (0.516, 0.973) & \ 0.836 (0.471, 0.967) \\
0.8 & 0.918 (0.626, 0.987) & \ 0.885 (0.544, 0.980) & \ 0.863 (0.503, 0.975) \\
0.9 & 0.931 (0.649, 0.990) & \ 0.902 (0.570, 0.985) & \ 0.884 (0.533, 0.981) \\
 \hline
\end{tabular}
\end{table}

\section{Concluding Remarks \label{sec:PH_conc}}
In this paper, we have developed new estimators and tests for one-shot device testing under proportional hazards assumption. This semi-parametric model is presented as an alternative to  parametric models, by allowing the hazard rate increasing in a non-parametric way. An extensive simulation study carried out shows the robustness of the proposed methods of inference. Because model selection is an important part of reliability analysis, a test statistic for checking the proportional hazards assumption is presented as well and applied to the numerical example.

As future work, it would be of interest to develop model selection criteria, and also to extend the proposed method to the case of competing risks problem, when there is more than one cause of failure of one-shot devices.

\begin{table}[p]\setlength{\tabcolsep}{2.1pt}  \renewcommand{\arraystretch}{1.7}
\caption{Bias for the semi-parametric model with $b=0$ and $c_0=6$. \label{table:PH_6000}}
\centering
\small
\begin{tabular}{r r c rrrr c rrrr}
     &          &&          &          &          &           &&          &          &             &           \\ \hline
  & & \multicolumn{4}{c}{ Uncontaminated  data}& & \multicolumn{4}{c}{Contaminated data}\\ 
 \cline{4-7} \cline{9-12}

$K_{is}=50$             & True value    && 0        & 0.2      & 0.4      & 0.6       && 0        & 0.2      & 0.4         & 0.6       \\ \hline 
$\eta_1$        & -0.66688 && -0.00494 & -0.00276 & -0.00053 & -0.00372 && 0.09898  & 0.06722  & 0.03547  & 0.01708  \\
$\eta_2$         & -0.01304 && -0.00228 & -0.00078 & 0.00109  & -0.00131 && 0.06902  & 0.04716  & 0.02531  & 0.01286  \\
$\eta_3$       & -3.92056 && -0.02788 & -0.01810 & -0.01389 & -0.01982 && 0.34916  & 0.23252  & 0.12087  & 0.05402  \\
$\alpha_1$      & 0.03000  && 0.00010  & 0.00002  & -0.00001 & 0.00002  && -0.00281 & -0.00193 & -0.00107 & -0.00056 \\
$\alpha_2$     & 0.03000  && 0.00033  & 0.00027  & 0.00025  & 0.00030  && -0.00259 & -0.00167 & -0.00081 & -0.00028 \\
$R(15,\boldsymbol{x}_0)$& 0.79857  && -0.00520 & -0.00611 & -0.00686 & -0.00671 && -0.03387 & -0.02502 & -0.01652 & -0.01202 \\
\hline 
  & & \multicolumn{4}{c}{ Uncontaminated  data}& & \multicolumn{4}{c}{Contaminated data}\\ 
 \cline{4-7} \cline{9-12}
$K_{is}=70$               &True value    && 0        & 0.2      & 0.4      & 0.6       && 0        & 0.2      & 0.4         & 0.6       \\ \hline 
$\eta_1$        & -0.66688 && -0.00780 & -0.00675 & -0.00716 & -0.00802 && 0.09876  & 0.06233   & 0.03209  & 0.01278  \\
$\eta_2$        & -0.01304 && -0.00459 & -0.00386 & -0.00410 & -0.00465 && 0.06810  & 0.04315   & 0.02257  & 0.00948  \\
$\eta_3$        & -3.92056 && -0.03954 & -0.03763 & -0.04073 & -0.04458 && 0.35084  & 0.21624   & 0.10254  & 0.03023  \\
$\alpha_1$      & 0.03000  && 0.00027  & 0.00025  & 0.00026  & 0.00029  && -0.00276 & -0.00173  & -0.00086 & -0.00030 \\
$\alpha_2$     & 0.03000  && 0.00035  & 0.00034  & 0.00038  & 0.00041  && -0.00267 & -0.00163  & -0.00075 & -0.00017 \\
$R(15,\boldsymbol{x}_0)$ & 0.79857  && -0.00197 & -0.00221 & -0.00223 & -0.00221 && -0.03131 & -0.02076  & -0.01246 & -0.00749  \\ 
\hline 
  & & \multicolumn{4}{c}{ Uncontaminated  data}& & \multicolumn{4}{c}{Contaminated data}\\ 
 \cline{4-7} \cline{9-12}
$K_{is}=100$              & True value    && 0        & 0.2      & 0.4      & 0.6       && 0        & 0.2      & 0.4         & 0.6       \\ \hline 
$\eta_1$        & -0.66688 && -0.00778 & -0.00682 & -0.00711 & -0.00785 && 0.09857  & 0.06207   & 0.03231  & 0.01320  \\
$\eta_2$         & -0.01304 && -0.00477 & -0.00412 & -0.00429 & -0.00477 && 0.06776  & 0.04275   & 0.02248  & 0.00952  \\
$\eta_3$         & -3.92056 && -0.02739 & -0.02332 & -0.02387 & -0.02586 && 0.36315  & 0.23019   & 0.12013  & 0.04993  \\
$\alpha_1$      & 0.03000  && 0.00031  & 0.00028  & 0.00029  & 0.00031  && -0.00271 & -0.00169  & -0.00084 & -0.00029 \\
$\alpha_2$     & 0.03000  && 0.00016  & 0.00013  & 0.00013  & 0.00015  && -0.00287 & -0.00185  & -0.00099 & -0.00045 \\
$R(15,\boldsymbol{x}_0)$ & 0.79857  && -0.00144 & -0.00176 & -0.00185 & -0.00187 && -0.03085 & -0.02034  & -0.01214 & -0.00720\\  \hline 
\end{tabular}
\end{table}

\begin{table}[p]\setlength{\tabcolsep}{2.1pt}  \renewcommand{\arraystretch}{1.7}
\caption{Bias for the semi-parametric model with $b=0.5$ and $c_0=6$. \label{table:PH_6005} }
\centering
\small
\begin{tabular}{r r c rrrr c rrrr}
     &          &&          &          &          &           &&          &          &             &           \\ \hline
  & & \multicolumn{4}{c}{ Uncontaminated  data}& & \multicolumn{4}{c}{Contaminated data}\\ 
 \cline{4-7} \cline{9-12}

$K_{is}=50$             & True value    && 0        & 0.2      & 0.4      & 0.6       && 0        & 0.2      & 0.4         & 0.6       \\ \hline 
$\eta_1$         & -1.38827 && -0.01224 & -0.00988 & -0.03700 & -0.07648 && 0.03590  & -0.00540 & -0.03356 & -0.08611 \\
$\eta_2$        & -0.48138 && -0.00687 & -0.00537 & -0.02153 & -0.04394 && 0.02362  & -0.00239 & -0.01962 & -0.04962 \\
$\eta_3$        & -6.46391 && -0.05973 & -0.05148 & -0.19693 & -0.40632 && 0.14538  & -0.03531 & -0.17304 & -0.47276 \\
$\alpha_1$      & 0.04946  && 0.00032  & 0.00023  & 0.00124  & 0.00274  && -0.00125 & 0.00010  & 0.00106  & 0.00325  \\
$\alpha_2$      & 0.04946  && 0.00061  & 0.00056  & 0.00174  & 0.00361  && -0.00097 & 0.00043  & 0.00155  & 0.00410  \\
$R(15,\boldsymbol{x}_0)$& 0.91810  && -0.00348 & -0.00385 & -0.00348 & -0.00272 && -0.01136 & -0.00486 & -0.00361 & -0.00255 \\
\hline 
  & & \multicolumn{4}{c}{ Uncontaminated  data}& & \multicolumn{4}{c}{Contaminated data}\\ 
 \cline{4-7} \cline{9-12}
$K_{is}=70$               &True value    && 0        & 0.2      & 0.4      & 0.6       && 0        & 0.2      & 0.4         & 0.6       \\ \hline 
$\eta_1$        & -1.38827 && -0.02188 & -0.01955 & -0.05972 & -0.10800 && 0.03391  & -0.01254 & -0.06567 & -0.12770 \\
$\eta_2$         & -0.48138 && -0.01318 & -0.01171 & -0.03481 & -0.06298 && 0.02199  & -0.00731 & -0.03817 & -0.07423 \\
$\eta_3$       & -6.46391 && -0.06923 & -0.06287 & -0.27869 & -0.55900 && 0.16868  & -0.03493 & -0.30595 & -0.67033 \\
$\alpha_1$      & 0.04946  && 0.00062  & 0.00055  & 0.00202  & 0.00446  && -0.00121 & 0.00033  & 0.00217  & 0.00530  \\
$\alpha_2$       & 0.04946  && 0.00054  & 0.00051  & 0.00235  & 0.00433  && -0.00128 & 0.00029  & 0.00260  & 0.00520  \\
$R(15,\boldsymbol{x}_0)$ & 0.91810  && -0.00174 & -0.00199 & -0.00127 & -0.00004 && -0.01041 & -0.00289 & -0.00121 & 0.00031  \\ 
\hline 
  & & \multicolumn{4}{c}{ Uncontaminated  data}& & \multicolumn{4}{c}{Contaminated data}\\ 
 \cline{4-7} \cline{9-12}
$K_{is}=100$              & True value    && 0        & 0.2      & 0.4      & 0.6       && 0        & 0.2      & 0.4         & 0.6       \\ \hline 
$\eta_1$        & -1.38827 && -0.01771 & -0.01652 & -0.06256 & -0.08467 && 0.04334  & -0.01518 & -0.06228 & -0.08025 \\
$\eta_2$        & -0.48138 && -0.01071 & -0.00996 & -0.03610 & -0.04904 && 0.02774  & -0.00875 & -0.03612 & -0.04659 \\
$\eta_3$        & -6.46391 && -0.05209 & -0.04718 & -0.27304 & -0.41072 && 0.21133  & -0.04194 & -0.27462 & -0.38078 \\
$\alpha_1$       & 0.04946  && 0.00057  & 0.00053  & 0.00204  & 0.00340  && -0.00145 & 0.00048  & 0.00226  & 0.00316  \\
$\alpha_2$      & 0.04946  && 0.00034  & 0.00030  & 0.00217  & 0.00315  && -0.00168 & 0.00026  & 0.00204  & 0.00295  \\
$R(15,\boldsymbol{x}_0)$ & 0.91810  && -0.00123 & -0.00140 & -0.00054 & 0.00003  && -0.01060 & -0.00223 & -0.00066 & -0.00012\\  \hline 
\end{tabular}
\end{table}

\begin{table}[p]\setlength{\tabcolsep}{2.1pt}  \renewcommand{\arraystretch}{1.7}
\caption{Bias for the semi-parametric model with $b=0.5$ and $c_0=6.5$. \label{table:PH_6500} }
\centering
\small
\begin{tabular}{r r c rrrr c rrrr}
     &          &&          &          &          &           &&          &          &             &           \\ \hline
  & & \multicolumn{4}{c}{ Uncontaminated  data}& & \multicolumn{4}{c}{Contaminated data}\\ 
 \cline{4-7} \cline{9-12}

$K_{is}=50$             & True value    && 0        & 0.2      & 0.4      & 0.6       && 0        & 0.2      & 0.4         & 0.6       \\ \hline 
$\eta_1$         & -0.66879 && 0.00223  & 0.00097  & 0.00020  & -0.00900 && 0.17046  & 0.14959  & 0.12180  & 0.11436  \\
$\eta_2$          & -0.01553 && 0.00320  & 0.00227  & 0.00156  & -0.00465 && 0.11845  & 0.10401  & 0.08415  & 0.08029  \\
$\eta_3$         & -4.42056 && -0.01196 & -0.00948 & -0.01543 & -0.03474 && 0.56481  & 0.50586  & 0.43705  & 0.36957  \\
$\alpha_1$     & 0.03000  && 0.00004  & 0.00001  & 0.00004  & 0.00021  && -0.00437 & -0.00389 & -0.00336 & -0.00286 \\
$\alpha_2$      & 0.03000  && 0.00014  & 0.00013  & 0.00018  & 0.00030  && -0.00426 & -0.00379 & -0.00324 & -0.00274 \\
$R(15,\boldsymbol{x}_0)$ & 0.87247  && -0.00529 & -0.00555 & -0.00534 & -0.00530 && -0.03609 & -0.03299 & -0.02952 & -0.02561 \\
\hline 
  & & \multicolumn{4}{c}{ Uncontaminated  data}& & \multicolumn{4}{c}{Contaminated data}\\ 
 \cline{4-7} \cline{9-12}
$K_{is}=70$               &True value    && 0        & 0.2      & 0.4      & 0.6       && 0        & 0.2      & 0.4         & 0.6       \\ \hline 
$\eta_1$         & -0.66879 && 0.00050  & -0.00537 & -0.00598 & -0.00468 && 0.16516  & 0.14230  & 0.12082  & 0.10748  \\
$\eta_2$         & -0.01553 && 0.00166  & -0.00294 & -0.00333 & -0.00217 && 0.11371  & 0.09772  & 0.08279  & 0.07497  \\
$\eta_3$         & -4.42056 && -0.03260 & -0.03492 & -0.03697 & -0.04026 && 0.55639  & 0.49446  & 0.42414  & 0.36377  \\
$\alpha_1$     & 0.03000  && 0.00016  & 0.00017  & 0.00019  & 0.00021  && -0.00434 & -0.00386 & -0.00329 & -0.00322 \\
$\alpha_2$    & 0.03000  && 0.00029  & 0.00032  & 0.00034  & 0.00036  && -0.00418 & -0.00367 & -0.00314 & -0.00318 \\
$R(15,\boldsymbol{x}_0)$& 0.87247  && -0.00224 & -0.00231 & -0.00226 & -0.00211 && -0.03358 & -0.03044 & -0.02652 & -0.02275 \\ 
\hline 
  & & \multicolumn{4}{c}{ Uncontaminated  data}& & \multicolumn{4}{c}{Contaminated data}\\ 
 \cline{4-7} \cline{9-12}
$K_{is}=100$              & True value    && 0        & 0.2      & 0.4      & 0.6       && 0        & 0.2      & 0.4         & 0.6       \\ \hline 
$\eta_1$         & -0.66879 && -0.00302 & -0.00428 & -0.00423 & -0.00453 && 0.15788  & 0.14182  & 0.12319  & 0.10443  \\
$\eta_2$          & -0.01553 && -0.00136 & -0.00242 & -0.00237 & -0.00256 && 0.10788  & 0.09716  & 0.08443  & 0.07154  \\
$\eta_3$         & -4.42056 && -0.02671 & -0.02485 & -0.02528 & -0.02720 && 0.55329  & 0.49534  & 0.43019  & 0.36647  \\
$\alpha_1$      & 0.03000  && 0.00019  & 0.00018  & 0.00019  & 0.00020  && -0.00422 & -0.00376 & -0.00325 & -0.00276 \\
$\alpha_2$     & 0.03000  && 0.00014  & 0.00014  & 0.00014  & 0.00015  && -0.00427 & -0.00381 & -0.00331 & -0.00281 \\
$R(15,\boldsymbol{x}_0)$ & 0.87247  && -0.00126 & -0.00139 & -0.00140 & -0.00135 && -0.03191 & -0.02878 & -0.02528 & -0.02190\\  \hline 
\end{tabular}
\end{table}

\begin{table}[p]\setlength{\tabcolsep}{2.1pt}  \renewcommand{\arraystretch}{1.7}
\caption{Bias for the semi-parametric model with $b=0.5$ and $c_0=6.5$.  \label{table:PH_6505}}
\centering
\small
\begin{tabular}{r r c rrrr c rrrr}
     &          &&          &          &          &           &&          &          &             &           \\ \hline
  & & \multicolumn{4}{c}{ Uncontaminated data}& & \multicolumn{4}{c}{Contaminated data}\\ 
 \cline{4-7} \cline{9-12}

$K_{is}=50$             & True value    && 0        & 0.2      & 0.4      & 0.6       && 0        & 0.2      & 0.4         & 0.6       \\ \hline 
$\eta_1$       & -1.38845 && -0.00292 & -0.02634 & -0.06961 & -0.10441 && 0.28565  & 0.19158  & 0.13289    & 0.07420   \\
$\eta_2$         & -0.48171 && -0.00007 & -0.01454 & -0.03906 & -0.08819 && 0.18184  & 0.12322  & 0.12394    & 0.12467   \\
$\eta_3$       & -7.28827 && -0.08460 & -0.15057 & -0.34018 & -0.97897 && 1.21433  & 0.81382  & 0.14850    & 0.33555   \\
$\alpha_1$       & 0.04946  && 0.00058  & 0.00102  & 0.00237  & -0.08206 && -0.00889 & -0.00592 & -0.00116   & -0.32679  \\
$\alpha_2$      & 0.04946  && 0.00057  & 0.00108  & 0.00246  & -0.10152 && -0.00891 & -0.00594 & -0.00112   & -0.39906  \\
$R(15,\boldsymbol{x}_0)$ & 0.96322  && -0.00163 & -0.00176 & -0.00145 & 0.00157  && -0.02785 & -0.01976 & -0.01062   & -0.00149  \\
\hline 
  & & \multicolumn{4}{c}{ Uncontaminated  data}& & \multicolumn{4}{c}{Contaminated data}\\ 
 \cline{4-7} \cline{9-12}
$K_{is}=70$               &True value    && 0        & 0.2      & 0.4      & 0.6       && 0        & 0.2      & 0.4         & 0.6       \\ \hline 
$\eta_1$        & -1.38845 && -0.01510 & -0.03564 & -0.05629 & -0.08443 && 0.28682  & 0.18935  & 0.03294     & 0.02196   \\
$\eta_2$        & -0.48171 && -0.00878 & -0.02069 & -0.03247 & -0.11914 && 0.18302  & 0.12036  & 0.02563     & 0.04542   \\
$\eta_3$        & -7.28827 && -0.06124 & -0.15643 & -0.22439 & -0.99978 && 1.24650  & 0.87564  & 0.14490  & -0.01407   \\
$\alpha_1$       & 0.04946  && 0.00041  & 0.00110  & 0.00155  & -0.01596 && -0.00916 & -0.00637 & -0.00104  & -0.15771   \\
$\alpha_2$      & 0.04946  && 0.00047  & 0.00111  & 0.00168  & -0.02075 && -0.00910 & -0.00631 & -0.00113  & -0.19334   \\
$R(15,\boldsymbol{x}_0)$ & 0.96322  && -0.00119 & -0.00105 & -0.00084 & 0.00146  && -0.02804 & -0.01977 & -0.00976  & -0.00119   \\ 
\hline 
  & & \multicolumn{4}{c}{ Uncontaminated  data}& & \multicolumn{4}{c}{Contaminated data}\\ 
 \cline{4-7} \cline{9-12}
$K_{is}=100$              & True value    && 0        & 0.2      & 0.4      & 0.6       && 0        & 0.2      & 0.4         & 0.6       \\ \hline 
$\eta_1$         & -1.38845 && -0.00904 & -0.01105 & -0.05924 & -0.23063  && 0.28616  & 0.19928  & 0.05644  &0.03762 \\
$\eta_2$         & -0.48171 && -0.00531 & -0.00635 & -0.03368 & -0.12308  && 0.18172  & 0.12619  & 0.03911 &-0.01015 \\
$\eta_3$    & -7.28827 && -0.06888 & -0.07584 & -0.29436 & -0.96654  && 1.22401  & 0.87425  & 0.21922 &-0.38512 \\
$\alpha_1$       & 0.04946  && 0.00048  & 0.00053  & 0.00202  & -0.00879  && -0.00897 & -0.00635 & -0.00168 &-0.09403 \\
$\alpha_2$      & 0.04946  && 0.00041  & 0.00047  & 0.00207  & -0.01198  && -0.00904 & -0.00642 & -0.00164 &-0.11596 \\
$R(15,\boldsymbol{x}_0)$ & 0.96322  && -0.00015 & -0.00021 & 0.00022  & 0.00236   && -0.02602 & -0.01814 & -0.00878 &-0.00050  \\  \hline 
\end{tabular}
\end{table}

\begin{figure}[p]
\centering
\begin{tabular}{cc}
\includegraphics[scale=0.4]{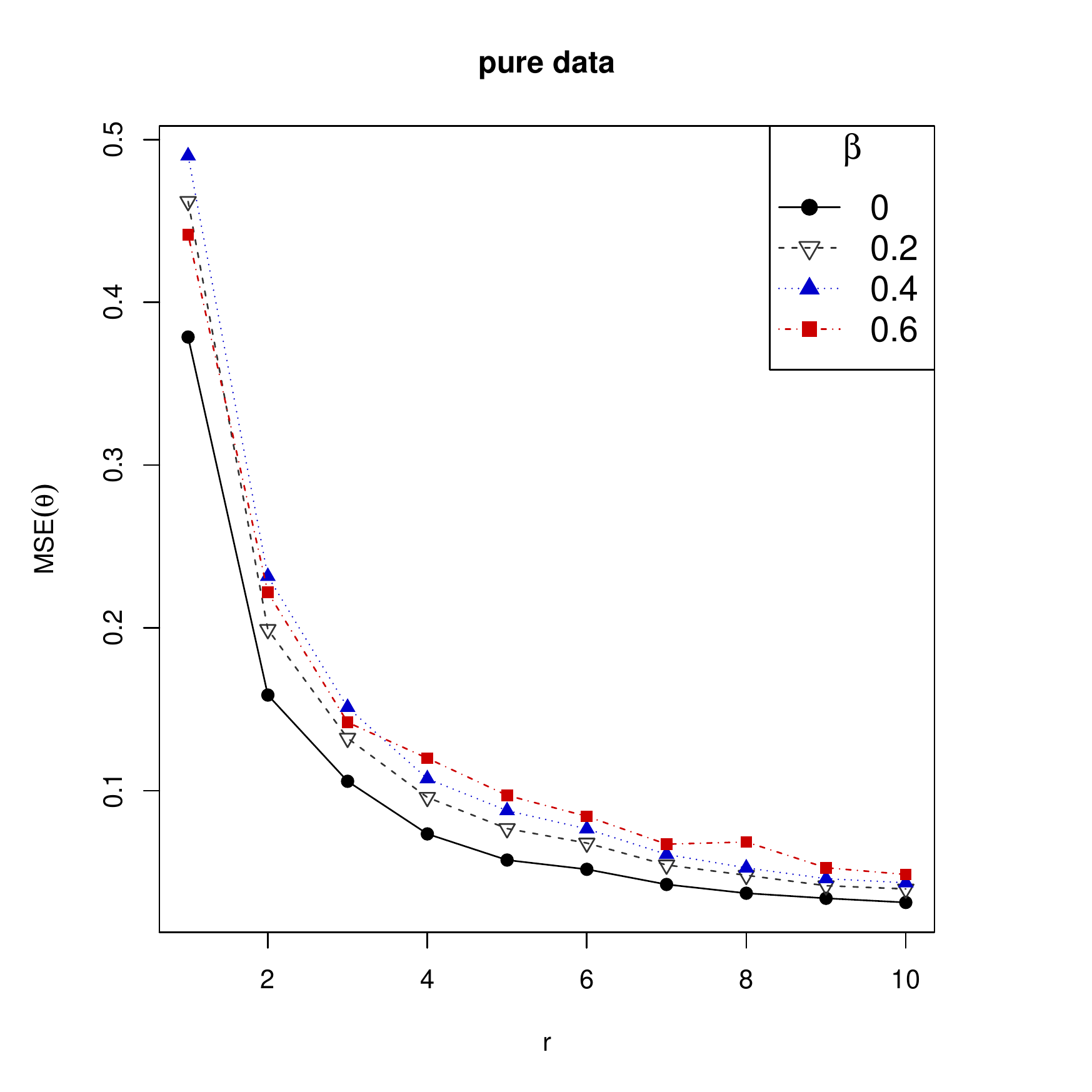} & 
\includegraphics[scale=0.4]{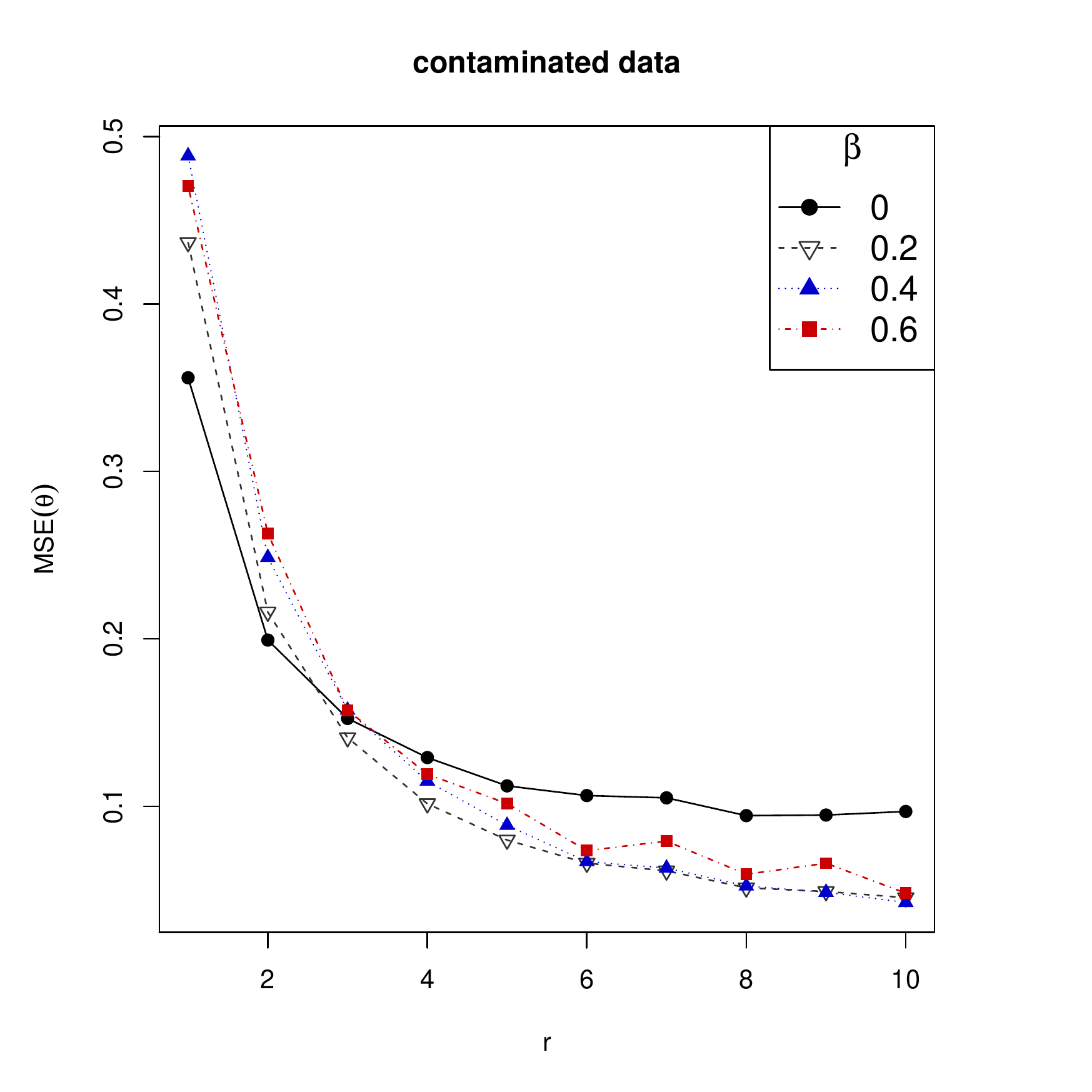} \\ 
\end{tabular}
\caption{MSEs for  unbalanced data \label{fig:PH_unbalanced}}
\end{figure}

\begin{figure}[p]
\centering
\begin{tabular}{cc}
\includegraphics[scale=0.4]{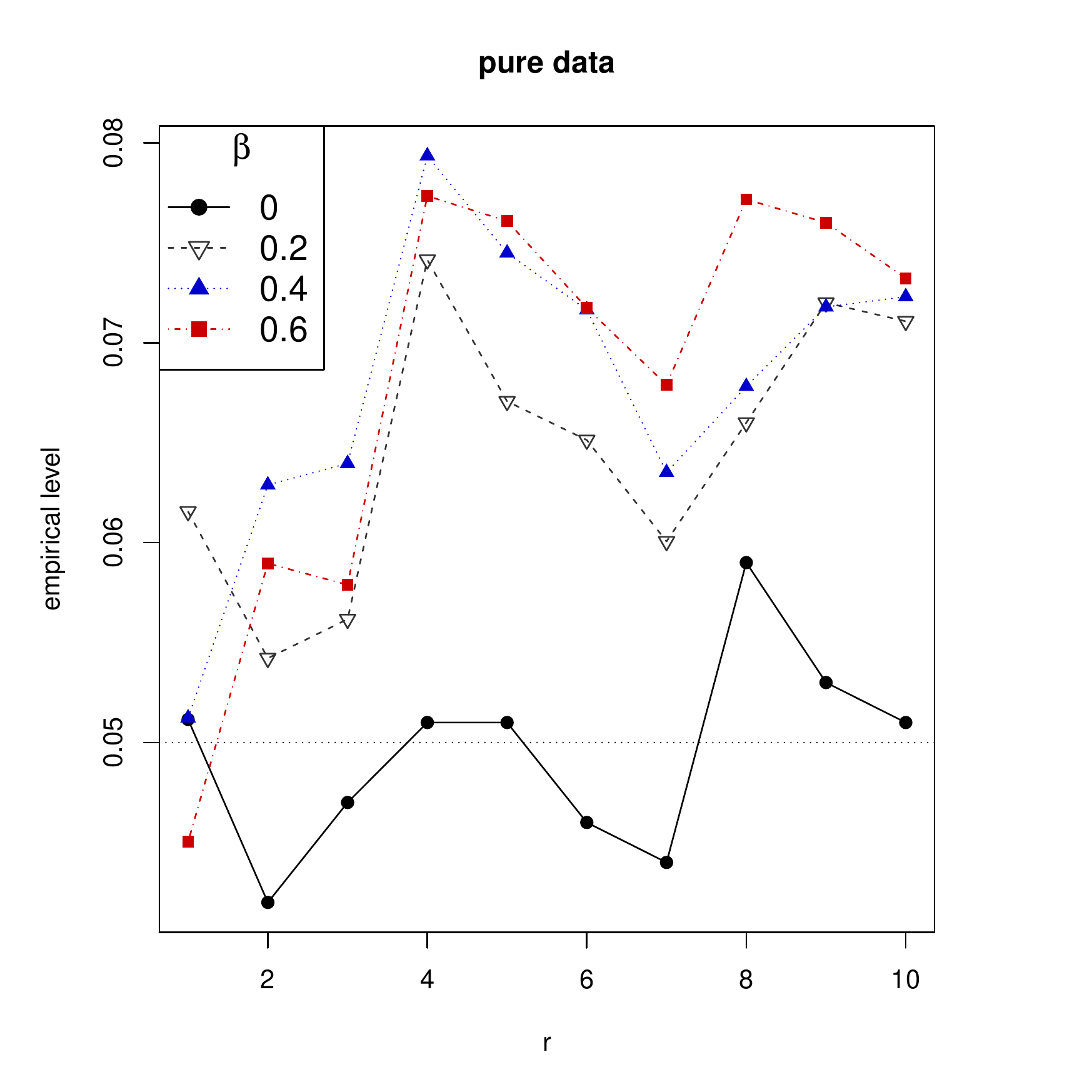} & 
\includegraphics[scale=0.4]{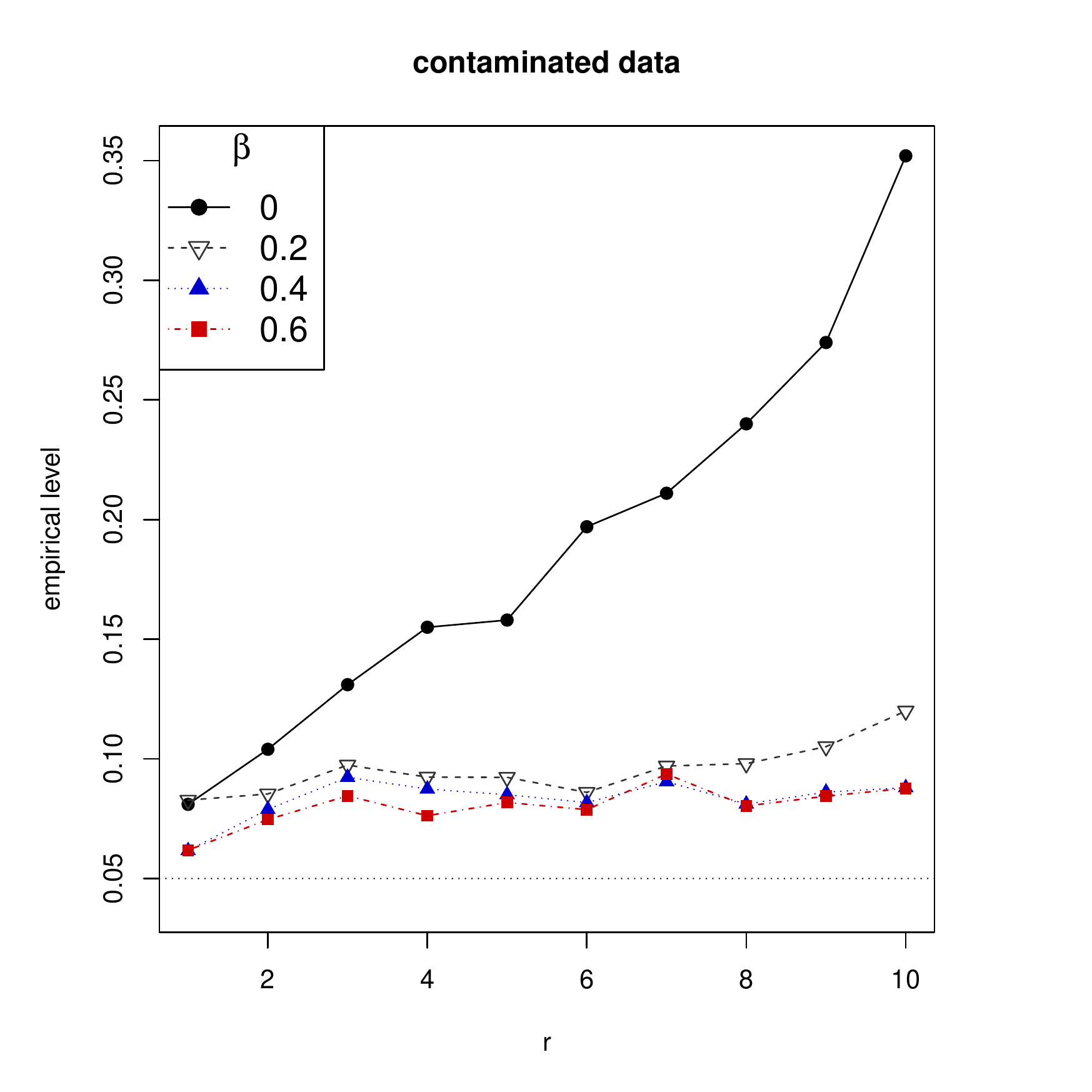} \\ 
\includegraphics[scale=0.4]{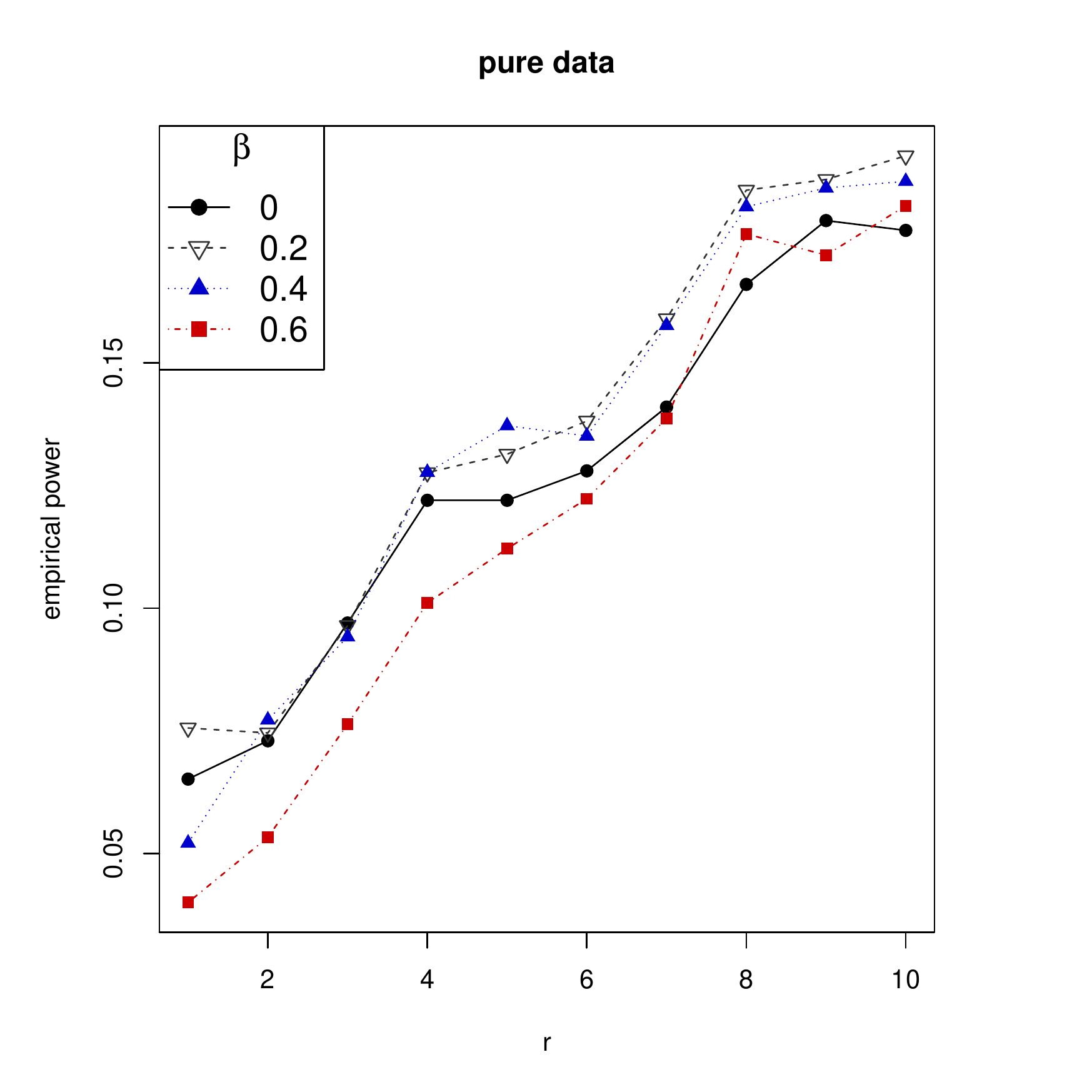} &
\includegraphics[scale=0.4]{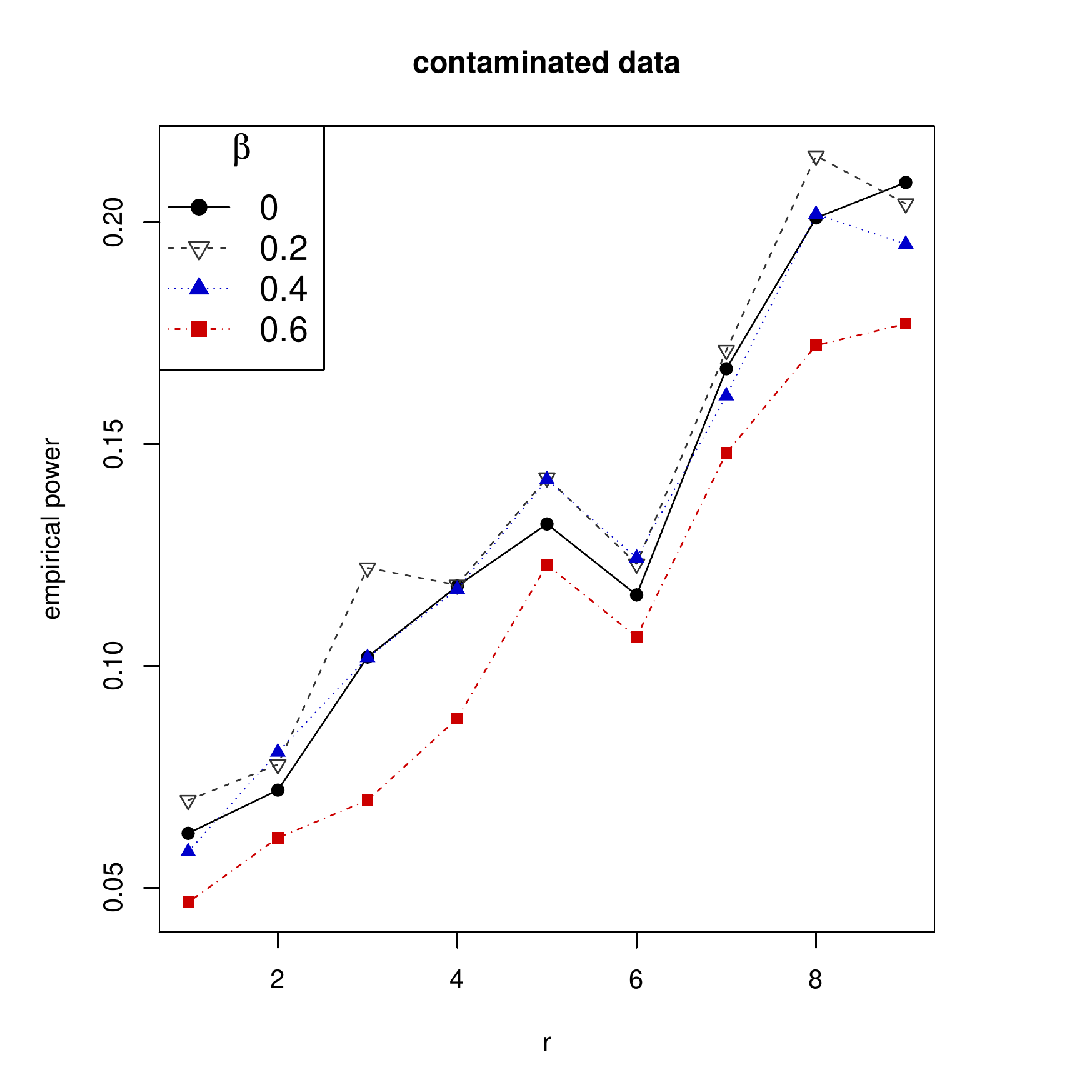} 
\end{tabular}
\caption{Estimated levels and powers for  unbalanced data \label{fig:PH_unbalanced_Wald}}
\end{figure}

\clearpage
\appendix

\section{Power function of  Wald-type tests}

In many cases, the power function of the proposed test procedure cannot be derived explicitly. In the following theorem, we present an useful asymptotic result for approximating the power function of the  Wald-type test statistics given in  (\ref{eq:PH_reject}). We shall assume that $\boldsymbol{\theta}^{\ast}\notin \boldsymbol{\Theta }_{0}$ is the true value of the parameter such that
\begin{equation*}
\label{eq:[15]}
\widehat{\boldsymbol{\theta}}_{\beta }\underset{K\rightarrow \infty }{\overset{P}{\longrightarrow }}\boldsymbol{\theta}^{\ast},
\end{equation*}%
and we denote 
$\ell _{\beta }\left( \boldsymbol{\theta}_{1},\boldsymbol{\theta}_{2}\right) =\boldsymbol{m}^{T}\left( \boldsymbol{\theta}_{1}\right) \left( \boldsymbol{M}^{T}\left( \boldsymbol{\theta}_{2}\right) \boldsymbol{\Sigma }_{\beta }\left( \boldsymbol{\theta}_{2}\right) \boldsymbol{M}\left( \boldsymbol{\theta}_{2}\right)\right) ^{-1}\boldsymbol{m}\left( \boldsymbol{\theta}_{1}\right) .$
We then have the following result.

\begin{theorem}
\label{res:wald2} We have%
\begin{equation*}
\sqrt{K}\left( \ell _{\beta }\left( \widehat{\boldsymbol{\theta}}_{\beta},\boldsymbol{\theta}^{*}\right) -\ell _{\beta }\left( \boldsymbol{\theta}^{\ast},\boldsymbol{\theta}^{\ast}\right) \right) \underset{K\rightarrow \infty }{\overset{\mathcal{L}}{\longrightarrow }}\mathcal{N}(0,\sigma _{W_{K},\beta}^{2}\left( \boldsymbol{\theta}^{\ast})\right),
\end{equation*}%
where 
\begin{equation*}
\sigma _{W_{K},\beta }^{2}\left( \boldsymbol{\theta}^{\ast}\right) =\left. \frac{\partial \ell _{\beta }\left( \boldsymbol{\theta},\boldsymbol{\theta}^{\ast}\right) }{\partial \boldsymbol{\theta}^{T}}\right\vert _{\boldsymbol{\theta}=\boldsymbol{\theta}^{\ast}}\boldsymbol{\Sigma }_{\beta }\left( \boldsymbol{\theta}^{\ast}\right)\left. \frac{\partial \ell _{\beta }\left( \boldsymbol{\theta},\boldsymbol{\theta}^{\ast}\right) }{\partial \boldsymbol{\theta}}\right\vert _{\boldsymbol{\theta}=\boldsymbol{\theta}^{\ast}}.
\end{equation*}
\end{theorem}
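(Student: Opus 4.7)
The plan is to apply the multivariate delta method to the function $\boldsymbol{\theta}\mapsto \ell_{\beta}(\boldsymbol{\theta},\boldsymbol{\theta}^{\ast})$ evaluated at the weighted minimum DPD estimator $\widehat{\boldsymbol{\theta}}_{\beta}$. The required input is the asymptotic normality of $\widehat{\boldsymbol{\theta}}_{\beta}$ around $\boldsymbol{\theta}^{\ast}$, which is exactly Theorem \ref{th:asymp}: $\sqrt{K}(\widehat{\boldsymbol{\theta}}_{\beta}-\boldsymbol{\theta}^{\ast})\overset{\mathcal{L}}{\longrightarrow}\mathcal{N}(\boldsymbol{0}_{I+J},\boldsymbol{\Sigma}_{\beta}(\boldsymbol{\theta}^{\ast}))$, valid because $\widehat{\boldsymbol{\theta}}_{\beta}\overset{P}{\longrightarrow}\boldsymbol{\theta}^{\ast}$ and the estimating equations identify $\boldsymbol{\theta}^{\ast}$ even when $\boldsymbol{\theta}^{\ast}\notin\boldsymbol{\Theta}_{0}$.

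First I would write the first-order Taylor expansion of $\ell_{\beta}(\cdot,\boldsymbol{\theta}^{\ast})$ about $\boldsymbol{\theta}^{\ast}$,
\begin{equation*}
\ell_{\beta}(\widehat{\boldsymbol{\theta}}_{\beta},\boldsymbol{\theta}^{\ast})=\ell_{\beta}(\boldsymbol{\theta}^{\ast},\boldsymbol{\theta}^{\ast})+\left.\frac{\partial \ell_{\beta}(\boldsymbol{\theta},\boldsymbol{\theta}^{\ast})}{\partial \boldsymbol{\theta}^{T}}\right|_{\boldsymbol{\theta}=\boldsymbol{\theta}^{\ast}}(\widehat{\boldsymbol{\theta}}_{\beta}-\boldsymbol{\theta}^{\ast})+o_{p}\!\left(\|\widehat{\boldsymbol{\theta}}_{\beta}-\boldsymbol{\theta}^{\ast}\|\right),
\end{equation*}
which is legitimate once one checks (by the quotient rule and the assumed continuity of $\boldsymbol{M}(\boldsymbol{\theta})$ and of $\boldsymbol{\Sigma}_{\beta}(\boldsymbol{\theta})$) that $\ell_{\beta}(\cdot,\boldsymbol{\theta}^{\ast})$ is continuously differentiable in a neighbourhood of $\boldsymbol{\theta}^{\ast}$. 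Since $\|\widehat{\boldsymbol{\theta}}_{\beta}-\boldsymbol{\theta}^{\ast}\|=O_{p}(K^{-1/2})$, the remainder is $o_{p}(K^{-1/2})$, so after rearranging and multiplying by $\sqrt{K}$ we get
\begin{equation*}
\sqrt{K}\bigl(\ell_{\beta}(\widehat{\boldsymbol{\theta}}_{\beta},\boldsymbol{\theta}^{\ast})-\ell_{\beta}(\boldsymbol{\theta}^{\ast},\boldsymbol{\theta}^{\ast})\bigr)=\boldsymbol{g}^{T}\sqrt{K}(\widehat{\boldsymbol{\theta}}_{\beta}-\boldsymbol{\theta}^{\ast})+o_{p}(1),
\end{equation*}
where $\boldsymbol{g}=\left.\partial \ell_{\beta}(\boldsymbol{\theta},\boldsymbol{\theta}^{\ast})/\partial \boldsymbol{\theta}\right|_{\boldsymbol{\theta}=\boldsymbol{\theta}^{\ast}}$.

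Next I would conclude via Slutsky's theorem: since $\sqrt{K}(\widehat{\boldsymbol{\theta}}_{\beta}-\boldsymbol{\theta}^{\ast})\overset{\mathcal{L}}{\longrightarrow}\mathcal{N}(\boldsymbol{0},\boldsymbol{\Sigma}_{\beta}(\boldsymbol{\theta}^{\ast}))$ and $\boldsymbol{g}$ is a deterministic vector, the linear combination $\boldsymbol{g}^{T}\sqrt{K}(\widehat{\boldsymbol{\theta}}_{\beta}-\boldsymbol{\theta}^{\ast})$ converges to $\mathcal{N}(0,\boldsymbol{g}^{T}\boldsymbol{\Sigma}_{\beta}(\boldsymbol{\theta}^{\ast})\boldsymbol{g})$, which is precisely $\sigma_{W_{K},\beta}^{2}(\boldsymbol{\theta}^{\ast})$ as defined in the statement.

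The main subtlety, and what I would check carefully, is the nondegeneracy of the limit. Under $H_{0}$ one has $\boldsymbol{m}(\boldsymbol{\theta}^{\ast})=\boldsymbol{0}_{r}$; because $\ell_{\beta}(\cdot,\boldsymbol{\theta}^{\ast})$ is quadratic in $\boldsymbol{m}$, its gradient $\boldsymbol{g}$ would then vanish and the first-order expansion would be degenerate, forcing a second-order argument that produces the $\chi^{2}_{r}$ behaviour of Theorem \ref{th:test_asym}. The point of the present theorem is that here $\boldsymbol{\theta}^{\ast}\notin\boldsymbol{\Theta}_{0}$, so $\boldsymbol{m}(\boldsymbol{\theta}^{\ast})\neq\boldsymbol{0}_{r}$, and a direct computation gives $\boldsymbol{g}=2\boldsymbol{M}(\boldsymbol{\theta}^{\ast})\bigl(\boldsymbol{M}^{T}(\boldsymbol{\theta}^{\ast})\boldsymbol{\Sigma}_{\beta}(\boldsymbol{\theta}^{\ast})\boldsymbol{M}(\boldsymbol{\theta}^{\ast})\bigr)^{-1}\boldsymbol{m}(\boldsymbol{\theta}^{\ast})\neq\boldsymbol{0}$ (using $\mathrm{rank}\,\boldsymbol{M}(\boldsymbol{\theta}^{\ast})=r$), so the first-order delta method is indeed the right tool and yields a genuine, non-degenerate Gaussian limit with the stated variance.
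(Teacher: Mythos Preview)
Your proposal is correct and follows essentially the same route as the paper: a first-order Taylor expansion of $\ell_{\beta}(\cdot,\boldsymbol{\theta}^{\ast})$ about $\boldsymbol{\theta}^{\ast}$, combined with the asymptotic normality $\sqrt{K}(\widehat{\boldsymbol{\theta}}_{\beta}-\boldsymbol{\theta}^{\ast})\overset{\mathcal{L}}{\longrightarrow}\mathcal{N}(\boldsymbol{0},\boldsymbol{\Sigma}_{\beta}(\boldsymbol{\theta}^{\ast}))$ from Theorem~\ref{th:asymp}. Your discussion of the nondegeneracy of the gradient when $\boldsymbol{\theta}^{\ast}\notin\boldsymbol{\Theta}_{0}$ is an additional, useful observation that the paper does not make explicit.
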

\begin{proof}
Under the assumption that 
\begin{equation*}
\widehat{\boldsymbol{\theta}}_{\beta }\underset{K\rightarrow \infty }{\overset{P}{\longrightarrow }}\boldsymbol{\theta}^{\ast},
\end{equation*}%
the asymptotic distribution of $\ell _{\beta }\left( \widehat{\boldsymbol{\theta}}_{1},\widehat{\boldsymbol{\theta}}_{2}\right) $ coincides with the asymptotic
distribution of $\ell _{\beta }\left( \widehat{\boldsymbol{\theta}}_{1},\boldsymbol{\theta}^{\ast}\right) .$ A first-order Taylor expansion of $\ell_{\beta }\left( \widehat{\boldsymbol{\theta}}_{\beta },\boldsymbol{\theta}\right) $ at $\widehat{\boldsymbol{\theta}}_{\beta }$, around $\boldsymbol{\theta}^{\ast}$, yields

\begin{equation*}
\left( \ell _{\beta }\left( \widehat{\boldsymbol{\theta}}_{\beta },\boldsymbol{\theta}^{\ast}\right) -\ell _{\beta }\left( \boldsymbol{\theta}^{\ast},\boldsymbol{\theta}^{\ast}\right) \right) =\left. \frac{\partial \ell _{\beta }\left( \boldsymbol{\theta},\boldsymbol{\theta}^{\ast}\right) }{\partial \boldsymbol{\theta}^{T}}\right\vert _{\boldsymbol{\theta}=\boldsymbol{\theta}^{\ast}}\left( \widehat{\boldsymbol{\theta}}_{\beta }-\boldsymbol{\theta}^{\ast}\right) +o_{p}(K^{-1/2}).
\end{equation*}%

Now, the result readily follows since 
\begin{equation*}
\sqrt{K}\left( \widehat{\boldsymbol{\theta}}_{\beta }-\boldsymbol{\theta}^{\ast}\right) \underset{K\rightarrow \infty }{\overset{\mathcal{L}}{\longrightarrow }}\mathcal{N}\left( \boldsymbol{0}_{J+1},\boldsymbol{\Sigma }_{\beta }\left( \boldsymbol{\theta}^{\ast}\right) \right) .
\end{equation*}
\end{proof}
\begin{remark}
Using Theorem \ref{res:wald2}, we can give an approximation for the power function of the Wald-type test statistic,  given in (\ref{eq:PH_reject}), at $\boldsymbol{\theta}^{\ast}$, as follows:
\begin{align*}
\pi _{W,K}\left( \boldsymbol{\theta}^{\ast}\right) & =\Pr \left( W_{K}\left( 
\widehat{\boldsymbol{\theta}}_{\beta }\right) >\chi _{r,\alpha }^{2}\right) \\
& =\Pr \left( K\left( \ell _{\beta }\left( \widehat{\boldsymbol{\theta}}_{\beta },\boldsymbol{\theta}^{\ast}\right) -\ell _{\beta }\left( \boldsymbol{\theta}^{\ast},\boldsymbol{\theta}^{\ast}\right) \right) >\chi _{r,\alpha }^{2}-K\ell _{\beta}\left( \boldsymbol{\theta}^{\ast},\boldsymbol{\theta}^{\ast}\right) \right) \\
& =\Pr \left( \frac{\sqrt{K}\left( \ell _{\beta }\left(\widehat{\boldsymbol{\theta}}_{\beta },\boldsymbol{\theta}^{\ast}\right) -\ell _{\beta }\left( \boldsymbol{\theta}^{\ast},\boldsymbol{\theta}^{\ast}\right) \right) }{\sigma _{W_{K},\beta}\left( \boldsymbol{\theta}^{\ast}\right) }>\frac{1}{\sigma _{W_{K},\beta}\left( \boldsymbol{\theta}^{\ast}\right) }\left( \frac{\chi _{r,\alpha }^{2}}{\sqrt{K}}-\sqrt{K}\ell _{\beta }\left( \boldsymbol{\theta}^{\ast},\boldsymbol{\theta}^{\ast}\right) \right) \right) \\
& =1-\Phi _{K}\left( \frac{1}{\sigma _{W_{K},\beta }\left( \boldsymbol{\theta}^{\ast}\right) }\left( \frac{\chi _{r,\alpha }^{2}}{\sqrt{K}}-\sqrt{K}\ell_{\beta }\left( \boldsymbol{\theta}^{\ast},\boldsymbol{\theta}^{\ast}\right) \right)\right)
\end{align*}%
for a sequence of distributions functions $\Phi _{K}\left( x\right) $ tending uniformly to the standard normal distribution $\Phi \left( x\right) $. It is clear that 
\begin{equation*}
\lim_{K\rightarrow \infty }\pi _{W,K}\left( \boldsymbol{\theta}^{\ast} \right) =1,
\end{equation*}%
i.e., the Wald-type test statistics are consistent in the sense of Fraser.
\end{remark}

\bibliography{bibliography} 
\bibliographystyle{abbrvnat}

\end{document}